\theoremstyle{plain}
\newtheorem{theorem}{Theorem}[section]
\newtheorem{corollary}[theorem]{Corollary}
\newtheorem{lemma}[theorem]{Lemma}
\newtheorem{proposition}[theorem]{Proposition}
\theoremstyle{definition}
\newtheorem{definition}[theorem]{Definition}
\newtheorem{notation}[theorem]{Notation}
\newtheorem{example}[theorem]{Example}
\newtheorem{examples}[theorem]{Examples}
\newcommand{\SA}{\axname{Signs}}
\newcommand{\Mod}{\axname{Mod}}
\newcommand{\sg}{\operatorname{{\mathbf s}}}
\newcommand{\di}{\mathsf{D}}
\newcommand{\st}{\mathsf{S}}
\newcommand{\diamarrow}[1]{\ensuremath{\mathbin{   %
                  \setlength{\unitlength}{1.6ex}
                  \begin{picture}(3,2)
                  \put(.09,.1){$\langle~#1~\rangle$}
                  \put(.65,-.55){\vector(-1,-1){2.5}}
                  \put(2.35,-.55){\vector(1,-1){2.5}}
                  \end{picture}
                  }}}
\newcommand{\diamleftarrow}[1]{\ensuremath{\mathbin{ %
                  \setlength{\unitlength}{1.6ex}
                  \begin{picture}(3,2)
                  \put(.09,.1){$\langle~#1~\rangle$}
                  \put(.65,-.55){\vector(-1,-1){2.5}}
                  \put(2.35,-.55){\line(1,-1){2.5}}
                  \end{picture}
                  }}}
\newcommand{\prefa}[1]{\ensuremath{\mathbin{        %
                  \setlength{\unitlength}{1.6ex}
                  \begin{picture}(3,2)(-.54,.1)
                  \put(-0.3,.2){$\lbrack~#1~\rbrack$}
                  \put(.95,-.5){\line(0,-1){2.4}}
                  \end{picture}
                  }}}
\newcommand{\prefarrow}[1]{\ensuremath{\mathbin{    %
                  \setlength{\unitlength}{1.6ex}
                  \begin{picture}(3,2)(-.54,.1)
                  \put(-0.3,.2){$\lbrack~#1~\rbrack$}
                  \put(.95,-.5){\vector(0,-1){2.4}}
                  \end{picture}
                  }}}
\newcommand{\semone}[1]{[\! [ #1 ]\!]}
\newcommand{\pmap}{\overset{p}{\longrightarrow}}
\newcommand{\ins}{{\mathit{Ins}}}
\newcommand{\var}{{\mathit{Var}}}
\newcommand{\ter}{{\mathit{Ter}}}
\newcommand{\Nat}{{\mathbb N}}
\newcommand{\tr}{{\mathtt{true}}}
\newcommand{\fa}{{\mathtt{false}}}
\newcommand{\cp}{\texttt{cp}}
\newcommand{\setz}{\texttt{set:0}}
\newcommand{\seto}{\texttt{set:1}}
\newcommand{\setai}{\texttt{set:ai}}
\newcommand{\setmi}{\texttt{set:mi}}
\newcommand{\seta}{\texttt{set:a}}
\newcommand{\setm}{\texttt{set:m}}
\newcommand{\sets}{\texttt{set:s}}
\newcommand{\test}{\texttt{test:0}}
\newcommand{\axname}[1]{\ensuremath{\textit{#1}}}
\newcommand{\IL}{\axname{IL}}
\newcommand{\Md}{\axname{Md}}
\newcommand{\Ril}{\axname{RIL}}
\title{Straight-line instruction sequence completeness for total calculation on cancellation meadows\thanks{Partially supported by the 
	  Dutch NWO project \emph{Thread Algebra for Strategic Interleaving}, 
	  project number 612.000.421.
	  }}
\author{
	Jan A.\ Bergstra
    \and
	Inge Bethke\\
\\
  {\small
	  Section Software Engineering,
	  Informatics Institute,
	  University of Amsterdam}\\
	{\small URL: \url{www.science.uva.nl/~{inge,janb}}
	}
}
\date{}
\begin{document}

\maketitle

\begin{abstract}
A combination of program algebra with the theory of meadows is designed leading to a theory of
computation in algebraic structures which use in addition to a zero test and copying instructions
the instruction set $\{x \Leftarrow 0, x \Leftarrow 1, x\Leftarrow -x, x\Leftarrow x^{-1}, x\Leftarrow x+y,
x\Leftarrow x\cdot y\}$. It is proven that total functions on cancellation meadows can be computed
by straight-line programs using at most 5 auxiliary variables. A similar result  
is obtained for signed meadows. 
\end{abstract}

{\bf Key words:}
Program algebra, Instruction sequences, Execution of programs, Straight-line programs,
Division-by-zero, Fields, Meadows, Equational specification, Calculation in meadows.

\section{Introduction}

\emph{Program algebra} is an approach to the formal description of the semantics of programming
languages.  It is a framework that permits algebraic reasoning about programs and
has been investigated in various settings (see e.g.\ \cite{BL02,BM84,BM88a,BM88b,W99}).

The theory of fields is a very active area which is not only of great theoretical interest but has also found
applications both within mathematics|combina\-to\-rics and algorithm analysis|as well as in engineering sciences
and, in particular, in coding theory and
sequence design. Unfortunately, since fields are not
axiomatized by equations only, Birkhoff's Theorem fails, i.e. fields do not
constitute a variety: they are not closed under products, subalgebras, and homomorphic images.
In \cite{BT07}, the concept of \emph{meadows} was introduced, structures very similar to fields|the
considerable difference being that meadows enjoy a total multiplicative inversion and do form a variety.

The aim of this paper is to combine these two areas of research in order to create a theory of computation 
in algebraic structures which can be used to investigate questions of definability
and complexity. 

Many computations in applied mathematics can be formulated as computations on fields. In many cases 
such computations terminate on all inputs yielding total functions. Replacing fields by meadows, which 
simplify their equational logic, we investigate
properties of instruction sequences which compute total functions on all meadows. 
We shall prove that total functions on cancellation meadows|meadows which in addition satisfy the inverse law known form the theory
of fields|can be computed by \emph{straight-line programs} 
with a bound supply of auxiliary variables.
These kind of programs have been amply investigated
and simplification and equivalence problems for several classes of straight-line programs over 
varying instruction sets are known (see e.g.\ \cite{IM83,IL83}).

The paper is organized as follows. In the next section we recall the basics of program algebra, thread algebra
 and meadows. Here the notion of program algebra refers to the concept introduced in \cite{BL02} which
 focuses on instruction sequences.
 In Section 3 we introduce instruction sequences for functions on the rational numbers. The main theorem is proven
 in Section 4. 
 We prove that total functions on cancellation meadows can be represented by a normal form without
tests and jumps which uses at most 5 auxiliary variables. This result is extended to signed cancellation meadows|cancellation
meadows that presuppose an ordering of its domain|in Section 5.

\section{The basics of program algebra, thread algebra and meadows}
In this section we recall program algebra, thread algebra  and  meadows. 
\subsection{Program algebra}
The programm algebra PGA was introduced in \cite{BL02}.

Assume $A$ is a set of constants with typical elements 
$\mathtt{a,b,c,\ldots}$.
Instruction sequences are of the following form ($k\in\Nat$):
\[I ::= \mathtt{a}\mid +\mathtt{a}\mid -\mathtt{a}
\mid \#k\mid\;! \mid I;I\mid I^\omega.\]
The first five forms above are called  
\emph{primitive instructions}. These are

\begin{itemize}
\item
\emph{basic instructions} $\mathtt a$ which
prescribe behaviours that are considered
indivisible and executable in finite time, and which return upon execution a Boolean reply value,
\item 
\emph{test instructions} obtained from basic instructions by prefixing them with either a $+$ 
(positive test instruction) or a $-$ (negative test instruction) which
control subsequent 
execution via the reply of their execution,
\item
\emph{jump instructions} $\#k$ which
prescribe to jump $k$ 
instructions ahead|if possible; otherwise deadlock
occurs|and generate no observable
behavior, and
\item the
\emph{termination instruction} $!$ which prescribes successful
termination, an event that is taken to be observable.
\end{itemize}

\emph{Finite instruction sequences}  are obtained from primitive  instructions using \emph{concatenation}: if
$I$ and $J$ are finite instruction sequences, then so is 
\[I;J\] 
which is the instruction sequence that lists 
$J$'s primitive instructions right after those of $I$. 
A special subclass of the finite instruction sequences are the so-called 
\emph{straight-line instruction sequences} which are finite instruction sequences \emph{without} tests and jumps.

\emph{Periodic instruction sequences} are defined using the repetition operator:
if $I$ is an instruction sequence, then
\[I^\omega\]
is the instruction sequence that repeats $I$ forever, thus $I;I;I;\ldots$.

In PGA, different types of equality are discerned,
the most simple of which is \emph{single-pass congruence},
identifying sequences that execute identical 
instructions.
For finite instruction sequences, single-pass
congruence boils
down to the associativity of concatenation, and is 
axiomatized by 
\[
(X;Y);Z=X;(Y;Z).
\]
In the sequel we
leave out brackets in repeated concatenations. In the case of infinite instruction
 sequences, additional axioms are needed.
Define $X^1=X$ and for $n>0$, $X^{n+1}=X;X^n$. 
According to~\cite{BL02},
 single-pass congruence for arbitrary instruction sequences
is axiomatized by the axiom schemes
PGA1-PGA4 in Table~\ref{pgas}.

\begin{table}
\centering
\hrule
\begin{align*}
(X;Y);Z&=X;(Y;Z)&
\mathrm{(PGA1)}\\[1mm]
(X^n)^\omega&=X^\omega
&\mathrm{(PGA2)}\\[1mm]
X^\omega;Y&=X^\omega
&\mathrm{(PGA3)}\\[1mm]
(X;Y)^\omega&=X;(Y;X)^\omega&\mathrm{(PGA4)}\\[1mm]
\#n{+}1;u_{1};\ldots;u_{n};\# 0 &= 
  \#0;u_{1};\ldots;u_{n};\# 0& \mathrm{(PGA5)} \\[1mm]
\#n{+}1;u_{1};\ldots;u_{n};\# m &= 
  \#n{+}m{+}1;u_{1};\ldots;u_{n};\# m& \mathrm{(PGA6)} \\[1mm]
(\#k {+} n{+}1;u_{1};\ldots;u_{n})^{\omega} &=
  (\#k;u_{1};\ldots;u_{n})^{\omega}&\mathrm{(PGA7)} \\[1mm]
X = u_{1};\ldots;u_{n};(v_{1};\ldots;v_{m{+}1})^{\omega}&\rightarrow~
  \#n{+}m{+}k{+}2;X = \#n{+}k{+}1;X~~
  &\mathrm{(PGA8)}
\end{align*}
\hrule
\caption{PGA-axioms for single-pass congruence}
\label{pgas}
\end{table}

Using the axioms PGA1--PGA4 and thus preserving single-pass congruence,
each instruction sequence can be rewritten into one of the following forms:
\begin{itemize}
\item[] $Y$ not containing repetition, or
\item[] $Y;Z^\omega$ with $Y$ and $Z$ not containing repetition.
\end{itemize}
Any instruction sequence in one of the two above forms is said to be in
\emph{first canonical form}.

Instruction sequences in first canonical form can be converted into \emph{second
canonical form}: a first canonical form in which no chained jumps
occur, i.e., jumps to jump instructions (apart from $\#0$),
and in which each non-chaining jump into the repeating
part is minimized.
The associated congruence is called \emph{structural congruence}
and is axiomatized in Table~\ref{pgas}. Note that axiom~PGA8
is an equational axiom, the implication is only used to enhance
readability.

Two examples, of which the right-hand sides are in second canonical form: 
\begin{align*}
\#2;\mathtt a;(\#5;\mathtt b;+\mathtt c)^\omega&=_{sc}
\#4;\mathtt a;(\#2;\mathtt b;+\mathtt c)^\omega,\\
+\mathtt a; \#2;(+\mathtt b;\#2;-\mathtt c;\#2)^\omega&=_{sc}
+\mathtt a; \#0;(+\mathtt b;\#0;-\mathtt c;\#0)^\omega.
\end{align*}
For each instruction sequence there exists a structurally equivalent
second canonical form.

For more information on PGA we refer to~\cite{BL02,PZ06}. 
\subsection{Thread algebra}
\label{sec:2.3}
Thread algebra is the behavioural semantics for PGA and was
introduced in e.g.\  \cite{BB03,BL02}
under the name Polarized Process Algebra.

Finite threads are defined inductively by:
\begin{eqnarray*}
\st&-&\text{\emph{stop}, the termination thread,}\\
\di&-&\text{\emph{inaction} or \emph{deadlock}, the inactive thread},\\
T\unlhd\mathtt a \unrhd T'&-&
\text{the \emph{postconditional
composition} of $T$ and $T'$ for action $\mathtt a $,}\\
&&\text{where $T$ and $T'$ are finite threads and $\mathtt a\in A$.}
\end{eqnarray*}
The behavior of the thread
$T\unlhd\mathtt a \unrhd T'$ starts with the \emph{action} $\mathtt
a$ and continues as $T$ upon reply $\tr$ to $\mathtt a$, and as $T'$
upon reply $\fa$. Note that finite threads always end in $\st$ or
$\di$.
We use \emph{action prefix} $\mathtt a \circ T$ as an abbreviation for
$T\unlhd\mathtt a \unrhd T$ and take $\circ$ to bind strongest.

For every finite thread there exists a finite upper bound to the number of consecutive
actions it can perform. The \emph{approximation operator} $\pi_n$ gives the behaviour
up to depth $n$ and is defined by
\begin{enumerate}
\item $\pi_0(T)=\di$,
\item $\pi_{n+1}(\st)=\st$,
\item $\pi_{n+1}(\di)=\di$, and
\item $\pi_{n+1}(T\unlhd\mathtt a \unrhd T')=\pi_n(T)\unlhd\mathtt a \unrhd \pi_n(T')$
\end{enumerate}
for finite threads $T, T'$ and $n\in \Nat$. Infinite threads are obtained as \emph{projective
sequences} of finite threads of the form $(T_n)_{n\in \Nat}$ where for every $n\in \Nat$,
$\pi_n (T_{n+1})=T_n$.

Upon its execution, a basic or test
instruction yields the equally named action in a post
conditional composition.
Thread extraction on PGA, notation 
\[|X|\]
with $X$ an instruction sequence,
is defined by the thirteen equations in Table~\ref{thirteen}.
In particular, note that upon the execution of a positive
test instruction $+\mathtt a$, the reply $\tr$ to $\mathtt a$
prescribes to continue with
the next instruction and $\fa$ to skip the next instruction
and to continue with the instruction thereafter; if no such 
instruction is available, deadlock occurs.  For the execution
of a negative
test instruction $-\mathtt a$, subsequent execution
is prescribed by the complementary replies.
\begin{table}
\centering
\hrule
\begin{align*}
&&|!| &= \st&
&&|!;X| &= \st\\[1mm]
&&|\mathtt a|&=\mathtt a\circ\di&
&&|\mathtt a;X| &=\mathtt a \circ |X|\\[1mm]
&&|\mathtt{{+}a}|&=\mathtt a\circ\di
\hspace{2.5cm}&
&&|{+}\mathtt{a};X| &= |X|\unlhd \mathtt a \unrhd |\#2;X|\\[1mm]
&&|{-}\mathtt a|&=\mathtt a\circ \di,&
&&|{-}\mathtt a;X| &= |\#2;X|\unlhd \mathtt a \unrhd |X|\\[4mm]
&&~~|\#k|&=\di&
&&|\#0;X| &= \di\\[1mm]
&&&&&&|\#1;X| &= |X|\\[1mm]
&&&&&&|\#k{+}2;u| &= \di\\[1mm]
&&&&&&~~|\#k{+}2;u;X| &= |\#k{+}1;X|
\end{align*}
\hrule
\caption{Equations for thread extraction,
where $\mathtt a$ ranges over the basic instructions, and $u$ over the
primitive instructions ($k\in\Nat$)}
\label{thirteen}
\end{table}

For an instruction sequence  in second canonical form, these equations
either yield a finite thread, or a so-called \emph{regular} thread, i.e.,
a finite state thread in which infinite paths can occur.
Each regular thread can be specified (defined) by a
finite number of recursive equations.
As an example, the regular thread $T$ specified by
\begin{align*}
T&=\mathtt a\circ T'\\
T'&=\mathtt c\circ T'\unlhd \mathtt b\unrhd(\st\unlhd \mathtt d\unrhd T)
\end{align*}
can be defined by 
$\mathtt{|a;(+b;\#2;\#3;c;\#4;+d;!;a)^\omega|}$.
A picture of this thread is

\setlength{\unitlength}{1.6ex}
\begin{picture}(30,20)(-20,0)
\put(1.6,10){$T'$:}
\put(1.6,15){$T$:}
\put(4,10){\diamarrow{\mathtt{b}}}
\put(4,15){\prefarrow{\mathtt{a}}}
\put(7.6,5){\diamleftarrow{\mathtt{d}}}
\put(0.4,5){\prefa{\mathtt{c}}}
\put(5,0){$\st$}
\put(12.45,1.95){\line(0,1){17.05}}
\put(5.5,19){\line(1,0){6.97}}
\put(5.5,19){\vector(0,-1){2}}
\put(-1,2){\line(1,0){2.9}}
\put(-1,2){\line(0,1){11}}
\put(-1,13){\line(1,0){6.5}}
\end{picture}
\\[4mm]
This thread can also given by the projective sequence $(\pi_n(T))_{n\in \Nat}$ where
\[
\begin{array}{rcl}
\pi_0(T)&=& \di\\
\pi_1(T)&=&  \mathtt a\circ\di\\
\pi_2(T)&=&\mathtt a\circ b \circ \di\\
\pi_3(T)&=&\mathtt a\circ (\mathtt c\circ \di \unlhd \mathtt b \unrhd \mathtt d \circ \di )
\end{array}
\]
and $\pi_{n+4}(T)= \mathtt a\circ (\mathtt c\circ \pi_{n+1}(T') \unlhd \mathtt b \unrhd ( \st \unlhd \mathtt b \unrhd \pi_{n+1}(T)))$.
Observe that thread extraction of straight-line instruction sequences yield finite and test-free threads.

For basic information on thread algebra we refer 
to~\cite{BBP05,PZ06}; more advanced matters, such as an operational
semantics for thread algebra, are discussed
in~\cite{BM07}. We here only mention the fact that 
each regular thread can be specified in PGA, and, conversely, 
that each PGA-program defines a regular thread.

\subsection{Meadows}
A {meadow}~\cite{BHT09,BT07} is a commutative ring with unit equipped with
a total unary operation $(\_)^{-1}$
named \emph{inverse} that satisfies the two equations
\begin{align*}
  (x^{-1})^{-1} &= x,   \\
  x\cdot(x \cdot x^{-1}) &= x. \quad(\Ril)
\end{align*}
Here \Ril\ abbreviates 
\emph{Restricted Inverse Law}.  We write \Md\ for the set of
axioms in Table~\ref{Md}.

\begin{table}
\centering
\rule[-2mm]{7cm}{.5pt}
\begin{align*}
	(x+y)+z &= x + (y + z)\\
	x+y     &= y+x\\
	x+0     &= x\\
	x+(-x)  &= 0\\
	(x \cdot y) \cdot  z &= x \cdot  (y \cdot  z)\\
	x \cdot  y &= y \cdot  x\\
	1\cdot x &= x \\
	x \cdot  (y + z) &= x \cdot  y + x \cdot  z\\
	(x^{-1})^{-1} &= x \\
	x \cdot (x \cdot x^{-1}) &= x
\end{align*}
\rule[3mm]{7cm}{.5pt}
\vspace{-5mm}
\caption{The set \Md\ of axioms for meadows}
\label{Md}
\end{table}

In the meadow $\mathbb{Q}$ of rational numbers, every element has a
restricted inverse. If $x\neq 0$, the inverse is just  the ``regular" inverse,
and $0^{-1}=0$. 
Another example is
ring $\mathbb{Z}/6\mathbb{Z}$ with elements $\{0,1,2, \ldots , 5\}$ where arithmetic is performed modulo $6$.
We find that every element has a restricted inverse as follows:
\[
\begin{array}{rclccrcl}
(0)^{-1} &=& 0&\hspace{1 cm}&
(1)^{-1} &=& 1\\
(2)^{-1} &=& 2&&
(3)^{-1} &=& 3\\
(4)^{-1} &=& 4&&
(5)^{-1} &=& 5\\
\end{array}
\]
A characterization of finite meadows can be found in \cite{BRS09}.
From the axioms in \Md\ the following identities are derivable:
\begin{align*}
	(0)^{-1}  &= 0,\\
	(-x)^{-1} &= -(x^{-1}),\\
	(x \cdot  y)^{-1} &= x^{-1} \cdot  y^{-1},
\\
0\cdot x  &= 0,\\
	x\cdot -y &= -(x\cdot y),\\
	-(-x)     &= x.
\end{align*}

We write $\Sigma_m=(0,1,+,\cdot,-,^{-1})$ for the
signature of meadows and $\ter(\Sigma_m,X)$ for the set of open meadow terms with free variables 
in $X$. For $t,u \in \ter(\Sigma_m,X)$ we shall often write $1/t$ 
or 
\[\frac 1 t \]
for $t^{-1}$, $tu$ for $t\cdot u$, $t/u$ for $t\cdot 1/u$, 
$t-u$ for $t+(-u)$, 
and freely use numerals $n$|abbreviating $\underbrace{1+ \cdots +1}_{n\times}$|and exponentiation with 
integer exponents as in $t^k$. 
We shall further write
\[1_x\text{ for } \frac x x\qquad\text{and}\qquad0_x\text{ for }1-1_x,
\]
so, $0_0=1_1=1$, $0_1=1_0=0$, and for all terms $t$,
\[0_t+1_t=1.\]

We write $\Sigma_r=(0,1,+,\cdot,-,)$ for the
signature of rings.
A \emph{polynomial} is an expression over $\Sigma_r$, thus  without 
inverse operator. Note that every polynomial can be represented as a sum of \emph{monomials}, 
i.e.\ products
of variables with integer coefficients. Meadow terms enjoy a particular standard
representation which was introduced in \cite{BP08}. 

\begin{definition}
A term $t\in \ter(\Sigma_m,X)$ is a \emph{Standard Meadow Form
(SMF)} if, for some $n\in\Nat$, $t$ is an \emph{SMF of level $n$}.
SMFs of level $n$ are defined as follows:
\begin{enumerate}
\item \textit{SMF of level $0:$} each expression of the form $s/t$ 
with $s$ and $t$ ranging over polynomials,
\item \textit{SMF of level $n+1:$} each expression of the form
\[0_{t'}\cdot s+1_{t'}\cdot t
\]
with $t'$ ranging over {polynomials} and $s$ and $t$ over SMFs of 
level $n$.
\end{enumerate}
\end{definition}
\begin{theorem}
\label{SMF}
For each  $t\in \ter(\Sigma_m,X)$ there exist  an SMF $t_{SMF}$
with the same variables such that $\Md \vdash t=t_{SMF}$.
\end{theorem}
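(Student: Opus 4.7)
The plan is to proceed by structural induction on $t\in\ter(\Sigma_m,X)$. For the base case, each variable and each of the constants $0,1$ is provably equal in \Md\ to $t/1$, which is an SMF of level $0$ (using $1_1=1$ and $x\cdot 1=x$). For the induction step I would establish that the class of terms provably equal to an SMF is closed under each of the four meadow operations, treating each operation by a secondary induction on the SMF-level of its argument(s).

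Negation is immediate: $-(p/q)=(-p)/q$ at level~$0$, and $-(0_{t'}\cdot s + 1_{t'}\cdot t) = 0_{t'}\cdot(-s) + 1_{t'}\cdot(-t)$ at the step, using distributivity and $x\cdot(-y)=-(x\cdot y)$. Multiplication is almost as easy: at level~$0$ the derived identity $(xy)^{-1}=x^{-1}y^{-1}$ and commutativity give $(p_1/q_1)\cdot(p_2/q_2) = (p_1p_2)/(q_1q_2)$, and higher levels distribute the outer factor over the conditional. For inverse, the base case uses $(p/q)^{-1} = q/p$ (from $(xy^{-1})^{-1} = yx^{-1}$), while the step hinges on the auxiliary identity
\[
\left(0_{t'}\cdot s + 1_{t'}\cdot t\right)^{-1} \;=\; 0_{t'}\cdot s^{-1} + 1_{t'}\cdot t^{-1},
\]
which I would verify from \Md\ using that $0_{t'}$ and $1_{t'}$ are orthogonal idempotents with $0_{t'}+1_{t'}=1$.

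The real work lies in closure under addition. The naive field-style rewrite $p_1/q_1+p_2/q_2 = (p_1q_2+p_2q_1)/(q_1q_2)$ is \emph{not} valid in \Md, because inverse is total and $0^{-1}=0$ corrupts the formula as soon as either $q_i$ denotes $0$. I therefore plan to use the case-splitting identity
\[
\frac{p_1}{q_1}+\frac{p_2}{q_2} \;=\; 0_{q_1}\cdot\frac{p_2}{q_2} + 1_{q_1}\cdot\left(0_{q_2}\cdot\frac{p_1}{q_1} + 1_{q_2}\cdot\frac{p_1q_2+p_2q_1}{q_1q_2}\right),
\]
which exhibits the sum as an SMF of level~$2$ and whose validity in \Md\ reduces, via \Ril, to verifying the scenarios in which each $q_i$ either equals $0$ or possesses a genuine multiplicative inverse. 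To add two SMFs of higher level I would proceed by a double induction on their levels, pulling the outermost conditional out front using distributivity and appealing to the inductive hypothesis on the inner SMFs.

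The main obstacle is the bookkeeping for the addition closure: tracking how the SMF-level grows under successive combinations and checking that the intermediate terms really match the formal SMF shape rather than a structurally similar variant. A secondary, more technical difficulty is carrying out the verification of the auxiliary identities (notably the componentwise inverse formula and the level-$0$ addition rule) purely equationally from \Md, rather than by semantic case analysis on models.
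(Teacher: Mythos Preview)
The paper does not prove this theorem itself; it simply cites~\cite{BP08}. Your structural-induction outline is precisely the approach taken there, and you have correctly identified both the main obstacle (closure under addition at level~$0$, where the na\"ive cross-multiplication fails in \Md) and the key auxiliary identity for inverse, $\bigl(0_{t'}\cdot s + 1_{t'}\cdot t\bigr)^{-1} = 0_{t'}\cdot s^{-1} + 1_{t'}\cdot t^{-1}$; in~\cite{BP08} the latter, and the equational (as opposed to semantic) verification of your level-$0$ addition formula, are obtained via the pseudo-unit and pseudo-zero propagation properties that the present paper quotes in Section~5.
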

\begin{proof}
See~\cite{BP08}.
\end{proof}
It follows that every meadow term is provably equal to a sum of quotients of polynomials.
\begin{corollary}\label{smf}
For every  $t\in \ter(\Sigma_m,X)$ there exist polynomials $s_0, t_0, \ldots , s_n, t_n$ such that
\[
\Md\vdash t=\frac{s_0}{t_0} + \ldots + \frac{s_n}{t_n}
\]
\end{corollary}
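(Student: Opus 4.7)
My plan is to derive the corollary directly from Theorem~\ref{SMF} by showing, by induction on the level $n$, that every SMF of level $n$ is provably equal to a finite sum of quotients of polynomials. Given $t\in\ter(\Sigma_m,X)$, apply Theorem~\ref{SMF} to obtain $t_{SMF}$ with $\Md\vdash t=t_{SMF}$, and then rewrite $t_{SMF}$ using the induction.

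For the base case, an SMF of level $0$ is literally $s/t$ with $s,t$ polynomials, so it is already a sum of one quotient of polynomials. For the inductive step, an SMF of level $n{+}1$ has the form $0_{t'}\cdot s+1_{t'}\cdot t$ where $t'$ is a polynomial and $s,t$ are SMFs of level $n$. By the induction hypothesis, $\Md\vdash s=\sum_{i}s_i/u_i$ and $\Md\vdash t=\sum_{j}r_j/v_j$ with all $s_i,u_i,r_j,v_j$ polynomials. Using $1_{t'}=t'/t'$ and the identity $(a/b)\cdot(c/d)=(ac)/(bd)$, which follows from the derivable identity $(x\cdot y)^{-1}=x^{-1}\cdot y^{-1}$ listed just after Table~\ref{Md}, I can rewrite
\[
1_{t'}\cdot t = \sum_j \frac{t'\cdot r_j}{t'\cdot v_j}.
\]
For the $0_{t'}$-part, I use $0_{t'}=1-1_{t'}=1/1-t'/t'$, so distributivity yields
\[
0_{t'}\cdot s = \sum_i \frac{s_i}{u_i}+\sum_i \frac{-(t'\cdot s_i)}{t'\cdot u_i},
\]
again a sum of quotients of polynomials (absorbing the sign into the numerator). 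Adding the two parts gives the required representation of the whole SMF of level $n{+}1$.

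The only non-routine step is the product rule $(a/b)(c/d)=(ac)/(bd)$, and the calculation it relies on is just $a\cdot b^{-1}\cdot c\cdot d^{-1} = (a\cdot c)\cdot(b\cdot d)^{-1}$ via commutativity, associativity, and $(x\cdot y)^{-1}=x^{-1}\cdot y^{-1}$. Everything else is bookkeeping on finite sums. Thus, by induction on the SMF level, $t_{SMF}$ is provably equal to a sum $\sum_{i=0}^n s_i/t_i$ of quotients of polynomials, and by Theorem~\ref{SMF} so is $t$.
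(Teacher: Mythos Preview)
Your proof is correct and follows essentially the same approach as the paper: both invoke Theorem~\ref{SMF} and then argue by induction on the SMF level, expanding $0_{t'}=1-t'/t'$ and $1_{t'}=t'/t'$ and distributing over the inductively obtained sums, absorbing the minus sign into the numerators. The only difference is cosmetic---you make the use of $(xy)^{-1}=x^{-1}y^{-1}$ explicit, whereas the paper leaves this implicit.
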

\begin{proof}
Let $t_{SMF}$ be a SMF of $t$. We employ induction on its level $n$. If $n=0$ then $t_{SMF}=s_0/t_0$ with $s_0$ and $t_0$
polynomials. Assume $n=m+1$. Then $t_{SMF}=0_{t''}\cdot s + 1_{t''}\cdot t'$ where $t''$ is a polynomial,
and $s,t'$ are SMF's of level m. By the induction hypothesis $s=s_0/t_0 + \ldots + s_k/t_k$ and 
$t'=u_0/v_0 + \ldots + u_l/v_l$ with $s_0, t_0, \ldots , s_k, t_k, u_1, v_1, \ldots , u_l, v_l$
polynomials. Then
\[
\begin{array}{rcl}
t_{SMF}&=&0_{t''}\cdot s + 1_{t''}\cdot t'\\[2mm]
&=& (1-\frac{t''}{t''})\cdot s + \frac{t''}{t''}\cdot t'\\[2mm]
&=& s - (\frac{t''s_0}{t''t_0} + \ldots + \frac{t''s_k}{t''t_k}) + 
\frac{t''u_0}{t''v_0} + \ldots + \frac{t''u_l}{t''v_l}\\[2mm]
&=& s +\frac{-t''s_0}{t''t_0} + \ldots + \frac{-t''s_k}{t''t_k} + 
\frac{t''u_0}{t''v_0} + \ldots + \frac{t''u_l}{t''v_l}
\end{array}
\]
and the last term is again a sum of quotients of polynomials.
\end{proof}
The term \emph{cancellation meadow} was introduced in~\cite{BPvdZ} 
for a \emph{zero-totalized field}|a field in which $0^{-1}=0$. Cancellation meadows satisfy in addition the so-called 
\emph{cancellation axiom}
\[
x \neq 0 ~\&~ x\cdot y = x\cdot z ~\longrightarrow~ y=z.\]
An equivalent version of the cancellation axiom  is the
\emph{Inverse Law} (\IL), i.e., the conditional axiom
\begin{align*}
  x\neq 0 ~\longrightarrow~ x\cdot x^{-1}=1.
 \quad(\IL)
\end{align*}
So \IL\ states that there are no proper zero divisors. 
(Another equivalent formulation of the cancellation property is 
$x\cdot y=0~\longrightarrow~x=0\text{ or }y=0$.) The rationals $\mathbb{Q}$ form a cancellation meadow,
$\mathbb{Z}/6\mathbb{Z}$ does not.

\section{Calculation on cancellation meadows}
Instruction sequences for functions on the rational numbers are designed in such a way that 
computations can be performed only with the aid of auxiliary
variables to which initially the input values are copied, and from which the final values are copied to the output.
\begin{definition}
\begin{enumerate}
\item 
We distinguish two infinite, countable sets of input and auxiliary variables
$\var_{in}=\{x_i\mid i \in \Nat \}$ and 
$\var_{aux}=\{a_i\mid i \in \Nat \}$, and a single output variable $y$.
$\var$ denotes the union of these variables. 
\item The instruction set $\ins(\mathbb{Q})$|instructions on the rational numbers|consists of the following input, 
auxiliary and output instructions:
\[
\begin{array}{rcl}
\ins(\mathbb{Q})_{in}&=&\{a.\cp(x) \mid a\in \var_{aux} \ \& \ x\in \var_{in}\},\\
\\
\ins(\mathbb{Q})_{aux} &=& \{a.\setz, a.\seto, a.\setai, a.\setmi\mid a \in \var_{aux}\},\\
&&\cup\{a.\seta(a'), a.\setm(a'),  \mid a,a' \in \var_{aux}\}\\
&&\cup \{a.\test \mid a \in \var_{aux}\}\\
\\
\ins(\mathbb{Q})_{out}& =& \{ y.\cp(a) \mid a\in \var_{aux}\}.
\end{array}
\]
\end{enumerate}
\end{definition}
Here $\texttt{ai}$ and $\texttt{mi}$ refer to the unary meadow operations of additive and multiplicative inversion, 
and $\texttt{a}$ and $\texttt{m}$ to  binary addition and multiplication. 
The intended meaning of these instructions is depicted in Table \ref{ins}. Since assignment instructions always succeed,
it is assumed that the returned
truth value  is $\tr$. An instruction of the form $a.\test$ is not an assignment instruction but a zero test and returns
a truth value depending on the value of $a$. 
\begin{table}[htbp]
\[
\begin{array}{crccrrclcc}
\hline\\
\hspace{1 cm}&a.\cp(x) & \hspace{1 cm} & [&a&\Leftarrow& x&]&\hspace{1 cm}\\
&a.\setz & & [&a&\Leftarrow& 0&]\\
&a.\seto &  & [&a&\Leftarrow &1&]\\
&a.\setai &  & [&a&\Leftarrow& -a&]\\
&a.\setmi &  & [&a&\Leftarrow &a^{-1}&]\\
&a.\seta (a') &  & [&a&\Leftarrow& a + a'&]\\
&a.\setm (a') &  & [&a&\Leftarrow& a\cdot a'&]\\
&a.\test && \\
&y.\cp (a) &  & [&y&\Leftarrow &a&]\\
\\
\hline
\end{array}
\]
\caption{The instruction set and its informal semantics}\label{ins}
\end{table}
\begin{examples}\label{insseq}
\begin{enumerate}
\item Consider the following straight-line instruction sequence $I_1$:
\[
\begin{array}{l}
a_0.\cp (x_0); a_1.\seto; a_1.\seta (a_1); a_0.\seta (a_1); \\
a_1.\cp (x_0); a_0. \setm (a_1); 
a_0.\setmi; y.\cp (a_0); !
\end{array}
\]
$I_1$ represents the total meadow mapping $x \mapsto ((x+ 2)x)^{-1}$: first the auxiliary 
variable $a_0$ is assigned the value of the input variable $x_0$ and then is raised by $2$, after which $a_0$ is multiplied by $x_0$, inverted and copied to the output variable
$y$.
\item The periodic instruction sequence $I_2$
\[
\begin{array}{l}
a_0.\cp (x_0); a_1.\cp (x_1); a_2.\seto; a_3.\seto; a_3.\setai;\\
(-a_1.\test; \#3;y.\cp (a_2); !; a_2.\setm (a_0); a_1.\seta (a_3))^\omega
\end{array}
\]
represents the partial mapping $(x_0,x_1)\mapsto x_0^{x_1}$: first, the two arguments are copied to the auxiliary variables $a_0$
and $a_1$, and $a_2$ and $a_3$ are assigned the constants $1$ and $-1$, respectively. In the repetition,
$a_2$ is multiplied by the first argument, and 
the second argument is decreased by 1 until the zero test succeeds and the value of $a_2$ is 
copied to the output. This partial meadow mapping is defined for all pairs of the form $\langle x, n
\rangle$.
\end{enumerate}
\end{examples}

Meadows are standard mathematical structures, and as such, they may be described using standard logical formalisms. Here, 
we shall use the following first-order predicate logic over meadows and regular threads consisting of
\begin{enumerate}
\item the constants $0$ and $1$,
\item countably infinite constants $c_0, c_1 , \ldots $,
\item the unary function symbols $-$ and $\ ^{-1}$, representing additive and multiplicative inversion,
\item the binary function symbols $+$ and $\cdot$, written infix and representing addition and multiplication,
\item for every regular thread $T$ and $k,n\in \Nat$, a $k+1$-ary termination predicate $R_{T,k,n}(\vec{x})$, describing the property
\emph{$T$ terminates on input $x_0, x_1, \ldots, x_k$ after at most $n$ steps},
\item the usual Boolean connectives and first-order quantifiers with variables 
ranging over elements of meadows.
\end{enumerate}
The standard interpretation of the termination predicates is given below. Since assignment instructions  always succeed, we may assume that 
regular threads corresponding to 
instruction sequences on rational numbers are of the form $\st$, $\di$, 
$T \trianglelefteq a_i.\test   \trianglerighteq T'$ or $\mathit{ins} \circ T$ where $ins$ is an assignment instruction. 
\begin{definition}
Let $\mathcal{M}$ be a meadow. 
\begin{enumerate}
\item If $\alpha$ is an assignment in $\mathcal{M}$, i.e.\ $\alpha \in \mathcal{M}^{\var}$,
$v\in \var$, and $m\in \mathcal{M}$, we denote by $\alpha [v:=m]$ the assignment
$\alpha'$ with 
\[
\alpha'(v') =
\begin{cases}
m & \text{ if } v'\equiv v\\
\alpha(v') & \text {otherwise.}
\end{cases}
\]
\item Let $T$ be a regular thread and $n\in \Nat$. $R_{T,n}^\mathcal{M}\subseteq\mathcal{M}^\var$ is defined inductively as 
follows.
\begin{enumerate}
\item 
\[
R_{T,0}^\mathcal{M}=
\begin{cases}
\mathcal{M}^ \var&\text{ if } T=S,\\
\emptyset& \text{ otherwise.}
\end{cases}
\]
\item 
\[
\begin{array}{rcl}
R_{S,n+1}^\mathcal{M}&=&\mathcal{M}^ \var\\[1mm]
R_{D,n+1}^\mathcal{M}&=&\emptyset\\[1mm]
R_{a_i.\cp (x_j)\circ T,n+1}^\mathcal{M}&=&  \{\alpha \in \mathcal{M}^{\var}\mid\alpha[a_i:=\alpha(x_j)] \in R_{T,n}^\mathcal{M}\}\\[1mm]
R_{a_i.\setz\circ T,n+1}^\mathcal{M}&=&  \{\alpha \in \mathcal{M}^{\var}\mid \alpha[a_{i}:=0]\in
R_{T,n}^\mathcal{M} \}\\[1mm]
R_{a_i.\seto\circ T,n+1}^\mathcal{M}&=&  \{\alpha \in \mathcal{M}^{\var}\mid \alpha[a_i:=1] \in R_{T,n}^\mathcal{M} \}\\[1mm]
R_{a_i.\setai\circ T,n+1}^\mathcal{M}&=&  \{\alpha \in \mathcal{M}^{\var}\mid \alpha[a_i:=-\alpha(a_i)] \in R_{T,n}^\mathcal{M}\}\\[1mm]
R_{a_i.\setmi \circ T,n+1}^\mathcal{M} &=&  \{\alpha \in \mathcal{M}^{\var}\mid \alpha [a_i:=\alpha(a_i)^{-1}]\in R_{T,n}^\mathcal{M}\}\\[1mm]
R_{a_i.\seta (a_j) \circ T, n+1}^\mathcal{M}&=&\{\alpha \in \mathcal{M}^{\var}\mid \alpha [a_i:=\alpha(a_i)+ \alpha(a_j)]\in R_{T,n}^\mathcal{M}\}\\[1mm]
R_{a_i.\setm (a_j) \circ T, n+1}^\mathcal{M} &=  &\{\alpha \in \mathcal{M}^{\var}\mid\alpha [a_i:=\alpha(a_i)\cdot \alpha(a_j)]\in R_{T,n}^\mathcal{M}\}\\[1mm]
R_{T\trianglelefteq a_i.\test   \trianglerighteq T', n+1}^\mathcal{M} &=&  
\{ \alpha\in R_{T,n}^\mathcal{M}\mid \alpha(a_i)=0\}\cup 
\{\alpha\in R_{T',n}^\mathcal{M} \mid \alpha(a_i)\neq 0\} \\
R_{y.\cp (a_j)\circ T, n+1}^\mathcal{M} &= &  \{\alpha \in \mathcal{M}^{\var}\mid\alpha [y:=\alpha(a_j)]\in R_{T,n}^\mathcal{M} \}
\end{array}
\]  
\end{enumerate}
\item For $k,n\in \Nat$ and regular thread $T$ we define $\semone{R_{T,k,n}}_\mathcal{M} \subseteq \mathcal{M}^{k+1}$ by
\[
\semone{R_{T,k,n}}_\mathcal{M}=\{ \langle \alpha(x_0), \ldots ,\alpha(x_k)\rangle
\mid \alpha \in R_{T,n}^\mathcal{M}\}.
\]
\end{enumerate}
\end{definition}
\begin{example}
We consider once more the instruction sequences $I_1$ and $I_2$ introduced in Example \ref{insseq}. 
\begin{enumerate}
\item It is easy to see that, if $I=ins_1; \ldots ; ins_n;!$ is a straight-line instruction sequence consisting
of $n$ assignment instructions and ending in a single final termination instruction, then $\mathcal{M}\models
\forall x_0, \ldots , x_k\ R_{|I|,k,n}(x_0, \ldots, x_k)$ for every meadow $\mathcal{M}$. Hence, in particular,
$\mathcal{M}\models \forall x \ R_{|I_1|,0,8}(x)$.
\item $|I_2|$ starts with 5 initializing actions and repeats 3 consecutive actions until the zero test succeeds
in which case termination occurs after a final copying action. We thus have for all meadows $\mathcal{M}$, 
$\mathcal{M}\models \forall x\ R_{|I_2|, 1, 3n+7}(x,n)$.
\end{enumerate}
\end{example}
\begin{lemma}\label{pred}
 For all $k,n\in \Nat$, regular threads $T$ and meadows $\mathcal{M}$,
 \begin{enumerate}
\item $R^\mathcal{M}_{T,n}\subseteq R^\mathcal{M}_{T,n+1}$
\item $R^\mathcal{M}_{T,n}=R^\mathcal{M}_{\pi_{n+1}(T),n}$
\item $\mathcal{M}\models \forall x_0, \ldots, x_k \ (R_{T,k,n}(x_0, \ldots, x_k) \longrightarrow R_{T,k,n+1}(x_0, \ldots, x_k))$
\item $\mathcal{M}\models \forall x_0, \ldots, x_k \ (R_{T,k,n}(x_0, \ldots, x_k) \longleftrightarrow R_{\pi_{n+1}(T),k,n}(x_0, \ldots, x_k))$
\end{enumerate}
\end{lemma}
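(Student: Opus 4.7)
The plan is to reduce everything to the two set-theoretic statements (1) and (2) and then obtain the logical statements (3) and (4) as immediate consequences. Indeed, by the defining equation
\[
\semone{R_{T,k,n}}_\mathcal{M}=\{\langle\alpha(x_0),\ldots,\alpha(x_k)\rangle\mid \alpha\in R^\mathcal{M}_{T,n}\},
\]
any inclusion (resp.\ equality) between $R^\mathcal{M}_{T,n}$ and $R^\mathcal{M}_{T',n}$ projects to an inclusion (resp.\ equality) between $\semone{R_{T,k,n}}_\mathcal{M}$ and $\semone{R_{T',k,n}}_\mathcal{M}$; so (3) follows from (1) with $T'=T$ and the indices shifted, and (4) follows from (2) with $T'=\pi_{n+1}(T)$.

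For (1), I would proceed by induction on $n$, uniformly in $T$ and $\mathcal{M}$. The base case $n=0$ is handled by splitting on whether $T=\st$: if so both sides are $\mathcal{M}^\var$, and otherwise $R^\mathcal{M}_{T,0}=\emptyset$. For the inductive step I perform a case analysis on the shape of $T$, using the normal form $T\in\{\st,\di,\mathit{ins}\circ T',\ T_1\trianglelefteq a_i.\test\trianglerighteq T_2\}$ coming from the assumption that assignment instructions always succeed. The cases $T=\st,\di$ are trivial since their $R$-sets do not depend on~$n$. For each assignment-prefixed thread $\mathit{ins}\circ T'$, the defining clause has the form $\{\alpha\mid\alpha[v:=f(\alpha)]\in R^\mathcal{M}_{T',n}\}$ for a suitable $v$ and $f$, and this operator is monotone in its argument, so the IH $R^\mathcal{M}_{T',n-1}\subseteq R^\mathcal{M}_{T',n}$ yields the required inclusion. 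In the test case the set $R^\mathcal{M}_{T,n+1}$ is a union of two subsets of $R^\mathcal{M}_{T_1,n}$ and $R^\mathcal{M}_{T_2,n}$ respectively, and again monotonicity together with the IH finishes the job.

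For (2), I again go by induction on $n$, again splitting on the four shapes of $T$. The base case $n=0$ uses that $\pi_1(T)$ preserves each of the four outer shapes (it turns $\mathit{ins}\circ T'$ into $\mathit{ins}\circ\di$ and $T_1\trianglelefteq a_i.\test\trianglerighteq T_2$ into $\di\trianglelefteq a_i.\test\trianglerighteq\di$, while $\st$ and $\di$ are fixed), so whether the level-$0$ set is $\mathcal{M}^\var$ or $\emptyset$ is unchanged. For the inductive step, the key observation is that $\pi_{n+2}$ peels off exactly one layer, so
\[
\pi_{n+2}(\mathit{ins}\circ T')=\mathit{ins}\circ\pi_{n+1}(T')
\]
and
\[
\pi_{n+2}(T_1\trianglelefteq a_i.\test\trianglerighteq T_2)=\pi_{n+1}(T_1)\trianglelefteq a_i.\test\trianglerighteq\pi_{n+1}(T_2).
\]
Plugging these into the defining clauses for $R^\mathcal{M}_{\cdot,n+1}$ and invoking the IH on $T'$, $T_1$, $T_2$ with index $n$ gives the desired equality in each case.

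No step strikes me as delicate; the only mild annoyance is bookkeeping through the many clauses of the definition of $R^\mathcal{M}_{T,n+1}$. They all share the common shape $\{\alpha\mid\alpha'\in R^\mathcal{M}_{T',n}\}$ for some $\alpha'$ built from $\alpha$ independently of $n$, which is why monotonicity (for (1)) and applicability of the IH to the immediate subthread (for (2)) go through uniformly; I would therefore treat one representative assignment clause in detail and note that the remaining assignment clauses are identical, then handle the test clause separately.
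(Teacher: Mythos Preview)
Your proposal is correct and follows exactly the approach the paper takes: the paper's proof reads in its entirety ``(1) and (2) are proven by straightforward induction; (3) and (4) follow from (1) and (2), respectively.'' Your write-up simply unpacks this straightforward induction with the natural case analysis on the shape of $T$, and your reduction of (3) and (4) to (1) and (2) via the projection $\semone{R_{T,k,n}}_\mathcal{M}$ is precisely what the paper intends.
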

\begin{proof} 
(1) and (2) are proven by  straightforward induction; (3) and (4) follow from (1) and (2), respectively. \end{proof}

The \emph{apply operator} has been introduced in~\cite{BP02} as a means to transform a given state machine according to a thread. Given a meadow, we view its assignments as state machines which can be transformed. The corresponding apply operator is then defined as follows.
\begin{definition}
Let $\mathcal{M}$ be a meadow
and $T$ be a finite thread. 
We define the  apply operator $T\bullet: \mathcal{M}^{\var}\cup \{\di\} \rightarrow \mathcal{M}^{\var}\cup \{\di\}$  as follows.
\[
\begin{array}{rcll}
T\bullet \di&=& \di\\
\st\bullet \alpha  &=&\alpha\\
\di\bullet \alpha &=& \di\\
(a_i.\cp (x_j)\circ T) \bullet \alpha &=&  T \bullet \alpha [a_i:=\alpha(x_j)]\\
(a_i.\setz\circ T) \bullet \alpha  &=&  T \bullet\alpha [a_i:=0]\\
(a_i.\seto\circ T) \bullet \alpha &=& T \bullet \alpha [a_i:=1]\\
(a_i.\setai  \circ T) \bullet \alpha &=&  T \bullet\alpha [a_i:=-\alpha(a_i)]\\
(a_i.\setmi\circ T)\bullet \alpha &=&  T \bullet \alpha [a_i:=\alpha(a_i)^{-1}]\\ 
(a_i.\seta (a_j)\circ T) \bullet \alpha  &=&  T \bullet\alpha [a_i:=\alpha(a_i)+ \alpha(a_j)]\\
(a_i.\setm (a_j)\circ T) \bullet \alpha &=& T \bullet \alpha [a_i:=\alpha(a_i)\cdot \alpha(a_j)]\\
( T\trianglelefteq a_i.\test   \trianglerighteq T')\bullet \alpha &=&  \begin{cases}
T\bullet \alpha & \text{if } \alpha(a_i)=0\\
T'\bullet \alpha & \text {otherwise}
\end{cases}\\
(y.\cp (a_j)\circ T) \bullet \alpha &=&  T \bullet\alpha [y:=\alpha(a_j)]
\end{array}
\] 
\end{definition}
For infinite threads $T$, the apply operator is defined on certain inputs if  $T$ terminates after finitely many steps.
\begin{lemma}
Let $\mathcal{M}$ be a meadow
and $T$ be a regular thread. Then for all $n\in \Nat$ and $\alpha\in R^\mathcal{M}_{T,n}$,
\begin{enumerate}
\item $\pi_{n+1}(T)\bullet \alpha\neq \di$, and
\item $\forall k>n\ \pi_{k}(T)\bullet \alpha =
\pi_{n+1}(T)\bullet \alpha$.
\end{enumerate}
\end{lemma}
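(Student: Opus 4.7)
The plan is to prove both clauses simultaneously by induction on $n$, with a case analysis on the possible shapes of the regular thread $T$. The excerpt already established that regular threads corresponding to instruction sequences on rational numbers are of one of four forms: $\st$, $\di$, $T_1 \trianglelefteq a_i.\test \trianglerighteq T_2$, or $\mathit{ins}\circ T'$ with $\mathit{ins}$ an assignment instruction. I will use this structural classification in the induction step.

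For the base case $n=0$, the definition of $R^\mathcal{M}_{T,0}$ forces $T=\st$ whenever $R^\mathcal{M}_{T,0}$ is nonempty. Then $\pi_1(\st)=\st$ and $\st\bullet\alpha=\alpha\neq\di$; moreover $\pi_k(\st)=\st$ for all $k\geq 1$, so both clauses are immediate. For the induction step, assume the lemma holds for $n$ and take $\alpha\in R^\mathcal{M}_{T,n+1}$. I would handle the four cases in turn. The cases $T=\st$ and $T=\di$ are trivial or vacuous. If $T=\mathit{ins}\circ T'$ for an assignment instruction, then $\alpha\in R^\mathcal{M}_{T,n+1}$ means $\alpha'\in R^\mathcal{M}_{T',n}$ for the appropriately updated assignment $\alpha'$, and $\pi_{n+2}(T)=\mathit{ins}\circ\pi_{n+1}(T')$ by the definition of $\pi$. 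Unwinding the apply operator yields
\[
\pi_{n+2}(T)\bullet\alpha \;=\; \pi_{n+1}(T')\bullet\alpha',
\]
which is $\neq\di$ by the induction hypothesis applied to $T'$ and $n$. For clause (2), for any $k>n+1$ one has $\pi_k(T)=\mathit{ins}\circ\pi_{k-1}(T')$ with $k-1>n$, so $\pi_k(T)\bullet\alpha=\pi_{k-1}(T')\bullet\alpha'=\pi_{n+1}(T')\bullet\alpha'=\pi_{n+2}(T)\bullet\alpha$ by the inductive hypothesis.

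The postconditional case $T=T_1\trianglelefteq a_i.\test\trianglerighteq T_2$ is handled by splitting on whether $\alpha(a_i)=0$. In each sub-case, $\alpha$ belongs to $R^\mathcal{M}_{T_j,n}$ for the relevant $j\in\{1,2\}$, and $\pi_{n+2}(T)=\pi_{n+1}(T_1)\trianglelefteq a_i.\test\trianglerighteq\pi_{n+1}(T_2)$. The apply operator selects the corresponding branch, so $\pi_{n+2}(T)\bullet\alpha=\pi_{n+1}(T_j)\bullet\alpha$, and both clauses transfer directly from the induction hypothesis on $T_j$. The same branch-selecting argument works for $\pi_k(T)$ with $k>n+1$.

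I do not anticipate any serious obstacle; the proof is essentially bureaucratic bookkeeping that matches the recursive definitions of $\pi$, $\bullet$, and $R^\mathcal{M}_{T,n}$ with each other. The only point that requires mild care is to remember to use the induction hypothesis \emph{on the same $n$} for both clauses in the assignment and test cases, and to verify that the approximation level lines up correctly so that $\pi_{n+1}(T')$ (rather than $\pi_n(T')$) appears inside $\pi_{n+2}(T)$.
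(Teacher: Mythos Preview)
Your proof is correct and is precisely the ``straightforward induction'' the paper invokes without further detail: induction on $n$ with the obvious case split on the thread shape, matching the recursive clauses of $\pi$, $\bullet$, and $R^\mathcal{M}_{T,n}$. There is nothing to add.
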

\begin{proof}
By straightforward induction.
\end{proof}
We can therefore define partial mappings corresponding to regular threads as below.
\begin{definition}
Let $\mathcal{M}$ be a meadow.
\begin{enumerate}
\item $\alpha \in \mathcal{M}^{\var}$ is called \emph{initial} if $\alpha (v)=0$ for all
$v\in \var - \var_{in}$.
\item Let $T$ be a regular thread and $k\in \Nat$. Then
$\semone{T}_\mathcal{M}^{k}: \mathcal{M}^{k+1}\pmap \mathcal{M}$ denotes the partial mapping defined as follows:
\[
\semone{T}_\mathcal{M}^{k}(m_0, \ldots , m_k)=
\begin{cases}
(\pi_{n+1}(T)\bullet \alpha_{m_0,\dots , m_k})(y) & \text{ if $\alpha_{m_0,\dots , m_k}
\in R^\mathcal{M}_{T, n}$,}\\
\text{undefined} & \text{ if for all $n\in \Nat$, $\alpha_{m_0,\dots , m_k} \not\in R^\mathcal{M}_{T, n}$.}
\end{cases}
\]
where $\alpha_{m_0,\dots , m_k}\in \mathcal{M}^{\var}$ is the \emph{initial} assignment with $\alpha(x_i)=m_i$ for $0\leq i \leq k$ and 
$\alpha (v) = 0$ for $v\in \var - \{x_0, \ldots , x_k\}$. 
\end{enumerate}
\end{definition}
\begin{notation}
If $I$ is an instruction sequence we shall write
$\semone{I}_\mathcal{M}^{k}$ for the corresponding meadow mapping instead of $\semone{|I|}_\mathcal{M}^{k}$. 
Moreover, when dealing with partial mappings, we let the symbols $\uparrow$ and $\downarrow$
denote un- and definedness, respectively.
\end{notation}
\begin{example}\label{insfunc}
We consider again the instruction sequences $I_1$ and $I_2$ given in Example \ref{insseq}.
\begin{enumerate}
\item Observe that 
\[
\begin{array}{rcl}
|I_1|&=&a_0.\cp (x_0)\circ a_1.\seto\circ a_1.\seta (a_1)\circ a_0.\seta (a_1) \\
&&\circ a_1.\cp (x_0)\circ a_0. \setm (a_1)\circ 
a_0.\setmi\circ y.\cp (a_0)\circ \st.
\end{array}
\]
Thus
\[
\begin{array}{rcl} 
|I_1| \bullet \alpha &= &\alpha[a_0:=\alpha(x_0)][a_1:=1][a_1:= 2][a_0:= \alpha(x_0)+2] \\
&&[a_1:=\alpha(x_0)][a_0:=(\alpha(x_0)+2)\cdot \alpha(x_0)]\\
&&[a_0:= ((\alpha(x_0)+2)\cdot \alpha(x_0))^{-1}][y:=((\alpha(x_0)+2)\cdot \alpha(x_0))^{-1}]
\end{array}
\]
for every meadow $\mathcal{M}$ and every assignment $\alpha\in \mathcal{M}^\var$. Hence $\semone{I_1}_\mathcal{M}^{0}(m)=((m+2)m)^{-1}$.
\item The periodic thread $|I_2|$ satisfies the equations
\[
\begin{array}{rcl}
|I_2|&=&a_0.\cp (x_0)\circ a_1.\cp (x_1)\circ a_2.\seto\circ a_3.\seto\circ a_3.\setai\circ T\\
T&=& (y.\cp (a_2)\circ S)\trianglelefteq a_1.\test   \trianglerighteq  (a_2.\setm (a_0)\circ  
a_1.\seta (a_3)\circ T).
\end{array}
\]
So
\[
\begin{array}{rcl}
|I_2|\bullet \alpha &=& T\bullet \alpha[a_0:= \alpha(x_0)][a_1:=\alpha(x_1)][a_2:=1][a_3:=1][a_3:=-1]\\
T\bullet \alpha &=&  \begin{cases}
\alpha[y:=\alpha (a_2)] & \text{if } \alpha(a_1)=0,\\
T\bullet \alpha[a_2:= \alpha (a_2)\cdot \alpha (x_0)][a_1:=\alpha(x_1)-1] & \text {otherwise.}
\end{cases}
\end{array}
\]
It follows that $|I_2|\bullet \alpha \neq \di$ if and only if $\alpha(x_1)= n$ for some $n\in \Nat$. Hence 
\[
\semone{I_2}_\mathcal{M}^{1}(m_0,m_1)=
\begin{cases}
m_0^{m_1} & \text{ if } m_1=n \text{ for some }n\in \Nat\\
\uparrow & \text{otherwise}
\end{cases}
\]
\end{enumerate}
for every meadow $\mathcal{M}$. In case $\mathcal{M}= \mathbb{Q}$, $I_2$ yields a non-total mapping; on 
prime fields
$\mathbb{Z}/p\mathbb{Z}$|considered zero-totalized| this mapping is total.
\end{example}
Every meadow mapping that is total on all meadows is clearly total on all cancellation meadows. The converse, however,
does not hold: consider the instruction sequence
\[
I=a_0.\cp (x_0); - a_0.\test ; \#2;\#4; a_0.\seta (a_0); + a_0.\test; \#0; y.\cp (a_1); !.
\]
Given any meadow $\mathcal{M}$, we have
\[
\semone{I}_\mathcal{M}^{0}(m)=
\begin{cases}
0 & \text{ if } m=0,\\
0 & \text{ if } m\neq 0\ \&\ 2m\neq 0,\\
\uparrow & \text{ otherwise}.
\end{cases}
\]
In the absence of proper zero divisors, $m=0$ if $2m=0$. Thus $\semone{I}^{0}_\mathcal{M}$ is the constant zero mapping on every 
cancellation meadow $\mathcal{M}$. On the zero-totalized field $\mathbb{Z}/6\mathbb{Z}$, however,  $3\neq 0$ and 
$2\times 3 = 0$, and thus $\semone{I}^{0}_{\mathbb{Z}/6\mathbb{Z}}(3)\uparrow$. 
\section{Characterization of total calculation on cancellation meadows}
In this section we shall prove the main theorem.

Total mappings share the following \emph{finite  representation property}.
\begin{proposition}
Let $T$ be a regular thread and $k\in \Nat$.  If $\semone{T}^{k}_\mathcal{M}$ is total on all
meadows $\mathcal{M}$, then there exists a finite thread $T'$ such that  
$\semone{T}^{k}_\mathcal{M}=\semone{T'}^{k}_\mathcal{M}$ for all meadows $\mathcal{M}$.
\end{proposition}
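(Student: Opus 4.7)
The plan is to argue by first-order compactness that the totality hypothesis forces a \emph{uniform} termination bound $N$ (independent of the meadow and of the input), and then to take $T' := \pi_{N+1}(T)$. Because $T$ is regular and each approximation $\pi_{n+1}(T)$ is a finite tree of depth at most $n$, $T'$ is a finite thread, and the preceding stability lemma for the apply operator will immediately give $\semone{T'}^{k}_\mathcal{M} = \semone{T}^{k}_\mathcal{M}$ for every meadow $\mathcal{M}$.

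The preparatory observation is that for each fixed $n$, and the fixed regular $T$ and arity $k$, the predicate $R_{T,k,n}$ is equivalent over $\Md$ to a quantifier-free first-order formula $\varphi_n(\vec{x})$ in the pure meadow signature $\Sigma_m$. Indeed, by Lemma \ref{pred}(4), $R_{T,k,n}(\vec{x})$ coincides with $R_{\pi_{n+1}(T),k,n}(\vec{x})$, and $\pi_{n+1}(T)$ is a finite thread. Symbolically tracing its finitely many execution branches from the initial assignment $\alpha_{\vec{x}}$, each branch performs a bounded sequence of assignments (updating each variable to a $\Sigma_m$-term in $\vec{x}$) interleaved with zero tests of such terms, and ends either in $\st$ or in $\di$. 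Hence $\varphi_n(\vec{x})$ can be written as a finite disjunction, one disjunct per $\st$-terminating branch, of conjunctions of equations $t = 0$ and disequations $t \neq 0$ with $t \in \ter(\Sigma_m, \vec{x})$.

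For the compactness step, suppose for contradiction that no uniform $N$ works. Then for each $n$ there is a meadow $\mathcal{M}_n$ and a tuple $\vec{m}_n \in \mathcal{M}_n^{k+1}$ with $\mathcal{M}_n \models \neg \varphi_n(\vec{m}_n)$. Introduce fresh constants $\vec{c} = c_0, \ldots, c_k$ and form the theory
\[
\Gamma \;=\; \Md \;\cup\; \{\neg \varphi_n(\vec{c}) : n \in \Nat\}.
\]
By Lemma \ref{pred}(3), $\Md \vdash \varphi_n(\vec{x}) \to \varphi_{n+1}(\vec{x})$, so the family $\{\neg \varphi_n(\vec{c})\}_{n \in \Nat}$ is $\Md$-descending: any finite subset is implied by the single $\neg \varphi_N(\vec{c})$ for the largest $N$ appearing, and this is satisfied by $\mathcal{M}_N$ with $\vec{c}$ interpreted as $\vec{m}_N$. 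Compactness hence yields a model $\mathcal{M}^*$ of $\Gamma$. In $\mathcal{M}^*$ the input $\vec{c}^{\,\mathcal{M}^*}$ lies outside every $\semone{R_{T,k,n}}_{\mathcal{M}^*}$, whence $\semone{T}^{k}_{\mathcal{M}^*}(\vec{c}^{\,\mathcal{M}^*})\uparrow$, contradicting the totality hypothesis.

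Therefore a uniform bound $N$ exists; set $T' := \pi_{N+1}(T)$. Then for every meadow $\mathcal{M}$ and every input $\vec{m}$, $\alpha_{\vec{m}} \in R_{T,N}^{\mathcal{M}}$, so by the stability property of the apply operator the values of $y$ in $T \bullet \alpha_{\vec{m}}$ and in $T' \bullet \alpha_{\vec{m}}$ coincide, giving $\semone{T}^{k}_\mathcal{M} = \semone{T'}^{k}_\mathcal{M}$. The main obstacle I foresee is the translation of $R_{T,k,n}$ into the $\Sigma_m$-formula $\varphi_n$: one must bookkeep, along each branch of the finite execution tree of $\pi_{n+1}(T)$, the $\Sigma_m$-terms describing the current symbolic values of the variables touched, so that the zero tests and the termination condition can be phrased as first-order conditions on $\vec{x}$. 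Once this translation is in hand, the remaining steps are either direct appeals to Lemma \ref{pred} and the apply-operator lemma or the standard compactness manoeuvre.
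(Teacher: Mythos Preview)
Your proposal is correct and follows essentially the same route as the paper: use compactness together with the monotonicity of $R_{T,k,n}$ (Lemma~\ref{pred}(3)) to obtain a uniform termination bound $N$, and then set $T' = \pi_{N+1}(T)$, invoking Lemma~\ref{pred}(4). The only difference is that you go to the trouble of unwinding $R_{T,k,n}$ into an explicit quantifier-free $\Sigma_m$-formula $\varphi_n$, whereas the paper has already built the predicates $R_{T,k,n}$ into its first-order language as primitives (with their standard interpretation fixed), so it can run the compactness argument directly on $\Md \cup \{\neg R_{T,k,n}(\vec{c}) : n \in \Nat\}$ without that translation step.
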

\begin{proof} Consider the set $\Gamma$ consisting of all meadow axioms together with the infinite
set $\{ \neg R_{T,k,n}(c_0, \ldots , c_k) \mid n\in \Nat\}$. If $\Gamma$ is finitely satisfiable, it must be 
simultaneously satisfiable, by Compactness, say in some meadow $\mathcal{M}$. This means that $\semone{T}_\mathcal{M}^k$ is not total, contradicting
 the assumption. We may therefore assume  that $\Gamma$ is not finitely satisfiable. By a standard model-theoretic argument and the monotonicity of the 
 termination predicate (Lemma \ref{pred} (3)),  it follows that for some $n\in \Nat$ and all meadows $\mathcal{M}$, $\mathcal{M}\models \forall x_0, \ldots, x_k\ 
 R_{T,k,n}(x_0, \ldots , x_k)$. Hence $\mathcal{M}\models \forall x_0, \ldots, x_k\ R_{\pi_{n+1}(T),k,n}(x_0, \ldots , x_k)$ for all $\mathcal{M}$ by Lemma \ref{pred} (4). 
 Then $\semone{T}_\mathcal{M}^k = \semone{\pi_{n+1}(T)}_\mathcal{M}^k$ for all meadows $\mathcal{M}$. \end{proof} 

\begin{proposition}\label{totalthread}
Let $T$ be a regular thread and $k\in \Nat$.  If $\semone{T}^{k}_\mathcal{M}$ is total on all
cancellation meadows $\mathcal{M}$, then there exists a finite thread $T'$ such that  
$\semone{T}^{k}_\mathcal{M}=\semone{T'}^{k}_\mathcal{M}$ for all cancellation meadows $\mathcal{M}$.
\end{proposition}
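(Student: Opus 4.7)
The plan is to adapt the compactness argument from the previous proposition, noting that cancellation meadows are still first-order axiomatizable (the axioms are $\Md$ together with the inverse law $\IL$, which is a first-order conditional formula). Since compactness is available for arbitrary first-order theories, the same strategy should go through verbatim, with $\Md$ replaced by $\Md \cup \{\IL\}$.

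Concretely, I would form the first-order theory
\[
\Gamma \;=\; \Md \cup \{\IL\} \cup \{\, \neg R_{T,k,n}(c_0,\dots,c_k) \mid n \in \Nat\,\}.
\]
If $\Gamma$ were finitely satisfiable, by the Compactness Theorem it would have a model $\mathcal M$, and $\mathcal M$ would be a cancellation meadow in which the initial assignment sending $x_i$ to $c_i^{\mathcal M}$ lies outside every $R_{T,n}^{\mathcal M}$; by the definition of $\semone{T}^k_{\mathcal M}$ this contradicts the hypothesis that $\semone{T}^k_{\mathcal M}$ is total on every cancellation meadow. So $\Gamma$ must fail to be finitely satisfiable.

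From non-finite-satisfiability I would extract a single witnessing depth $n$: a finite inconsistent subset mentions only finitely many predicates $\neg R_{T,k,n_1},\dots,\neg R_{T,k,n_j}$, and by the monotonicity part of Lemma~\ref{pred}(3), $\neg R_{T,k,n^*}$ already implies each $\neg R_{T,k,n_i}$ for $n^* = \max_i n_i$. Hence $\Md \cup \{\IL\} \vdash \forall x_0,\dots,x_k\, R_{T,k,n^*}(x_0,\dots,x_k)$, so every cancellation meadow satisfies this sentence. Lemma~\ref{pred}(4) then gives the same statement with $T$ replaced by $\pi_{n^*+1}(T)$, which is a finite thread; take $T' \wor \pi_{n^*+1}(T)$.

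The final step, which is essentially a bookkeeping check rather than an obstacle, is to verify that $\semone{T}_{\mathcal M}^k = \semone{T'}_{\mathcal M}^k$ for every cancellation meadow $\mathcal M$. For each input tuple the initial assignment lies in $R^{\mathcal M}_{T,n^*}$ (by the sentence just derived) and equally in $R^{\mathcal M}_{T',n^*}$, so both sides are defined; the value produced by the apply operator depends only on the first $n^*+1$ levels of the projective sequence, and $T$ and $T'$ agree up to that depth by construction, so the two values coincide. The only place where one might worry is the step from ``not finitely satisfiable'' to a single uniform $n^*$, but this is exactly where monotonicity of the termination predicates does the work, so no genuinely new ingredient beyond the proof of the previous proposition is needed.
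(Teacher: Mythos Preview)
Your proposal is correct and follows essentially the same route as the paper: the paper's proof simply says to repeat the compactness argument of the preceding proposition with $\Gamma$ augmented by the cancellation axiom, which is exactly what you do (using the equivalent form $\IL$). Your write-up is in fact more detailed than the paper's one-line reduction, but the underlying argument is identical.
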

\begin{proof} Repeat the previous proof with $\Gamma$ supplemented with the cancellation axiom \\
\[\forall x,y,z\ (x\neq 0\ \&\ x\cdot y =x \cdot z \longrightarrow
y=z).\] \end{proof}

Next we shall show that tests can be abandoned without the loss of expressive power.

For $t\in \ter(\Sigma_m,\var)$, $\semone{t}_{\mathcal{M},\alpha}$ denotes the interpretation of $t$ in the 
meadow $\mathcal{M}$ under 
the assignment $\alpha$, and if $\sigma \in \ter(\Sigma_m,\var)^\var$, then $t^\sigma$ is the result of substituting all variables $v$ occurring in $t$ by 
$\sigma (v)$.
Recall that substitutions and assignments
interact in the following way.
\begin{lemma}\label{sublemma}
Let $\mathcal{M}$ be a meadow, $\alpha \in \mathcal{M}^\var$ an assignment and $\sigma\in \ter(\Sigma_m,\var)^\var$ a substitution.
Define $\alpha '\in \mathcal{M}^\var$ by $\alpha '(v)=\semone{v^\sigma}_{\mathcal{M},\alpha}$. Then for all $t\in \ter(\Sigma_m,\var)$,
\[
\semone{t}_{\mathcal{M}, \alpha '}= \semone{t^\sigma}_{\mathcal{M}, \alpha}.
\]
\end{lemma}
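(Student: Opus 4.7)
The plan is to prove the lemma by structural induction on the meadow term $t \in \ter(\Sigma_m, \var)$. The key observation is that the interpretation map $\semone{\cdot}_{\mathcal{M},\beta}$ is, for fixed $\mathcal{M}$ and $\beta$, a $\Sigma_m$-homomorphism from the term algebra to $\mathcal{M}$, so the two sides of the desired equation differ only in how they evaluate variable occurrences. The auxiliary assignment $\alpha'$ is engineered precisely to mimic the effect of applying $\sigma$ and then evaluating under $\alpha$.

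For the base cases, if $t$ is a constant $0$ or $1$, both sides equal the corresponding element of $\mathcal{M}$ since constants are fixed by $\sigma$. If $t$ is a variable $v$, then by the definition of $\alpha'$ we have $\semone{v}_{\mathcal{M},\alpha'} = \alpha'(v) = \semone{v^\sigma}_{\mathcal{M},\alpha}$, which is exactly the claim.

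For the inductive step, suppose the result holds for subterms $s, s_1, s_2$. Since $\sigma$ commutes with the term constructors (i.e., $(-s)^\sigma = -s^\sigma$, $(s^{-1})^\sigma = (s^\sigma)^{-1}$, $(s_1 + s_2)^\sigma = s_1^\sigma + s_2^\sigma$, and $(s_1 \cdot s_2)^\sigma = s_1^\sigma \cdot s_2^\sigma$), and since interpretation in $\mathcal{M}$ commutes with the operations of $\Sigma_m$ (by definition of a meadow as a $\Sigma_m$-algebra), each case reduces immediately to the induction hypothesis. For instance, in the multiplicative inversion case,
\[
\semone{(s^{-1})^\sigma}_{\mathcal{M},\alpha} = \semone{(s^\sigma)^{-1}}_{\mathcal{M},\alpha} = \bigl(\semone{s^\sigma}_{\mathcal{M},\alpha}\bigr)^{-1} = \bigl(\semone{s}_{\mathcal{M},\alpha'}\bigr)^{-1} = \semone{s^{-1}}_{\mathcal{M},\alpha'},
\]
and the other operation cases are essentially identical.

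There is no real obstacle here; the proof is a textbook structural induction and the only mildly delicate point is making sure one reads off the definition of $\alpha'$ correctly in the variable base case. Accordingly, I would just remark that the proof proceeds by induction on $t$, dispatch the constant and variable cases explicitly, and note that the operation cases follow from the homomorphism property of $\semone{\cdot}_{\mathcal{M},\beta}$ together with the fact that substitution commutes with the term constructors.
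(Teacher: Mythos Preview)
Your proof is correct. The paper does not actually supply a proof of this lemma: it is introduced with the phrase ``Recall that substitutions and assignments interact in the following way'' and is treated as a standard fact. Your structural induction on $t$ is exactly the routine argument one would give, and there is nothing to compare.
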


\begin{proposition}\label{fromT2t}
Let $T$ be a finite thread and $k\in \Nat$. Then there exists a term
$t_{T}\in \ter(\Sigma_m,\{x_0, \ldots , x_k\})$ such that for all cancellation meadows $\mathcal{M}$ and all 
$m_0, \ldots, m_k\in \mathcal{M}$, 
\[
\semone{T}^k_{\mathcal{M}}(m_0, \ldots , m_k)\downarrow \longrightarrow \semone{T}^k_{\mathcal{M}}(m_0, \ldots , m_k)=
\semone{t_{T}}_{\mathcal{M},\alpha_{m_0, \ldots , m_k}}.
\]
\end{proposition}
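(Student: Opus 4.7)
The plan is to strengthen the statement to an assertion about a substitution that tracks the thread's cumulative effect on \emph{every} variable, and then extract $t_T$ by composing the resulting term with a projection that zeroes out all variables outside $\{x_0,\ldots,x_k\}$. Concretely, I would prove by induction on the structure of the finite thread $T$ that there exists $\sigma_T\in\ter(\Sigma_m,\var)^\var$ such that, for every cancellation meadow $\mathcal{M}$, every $\alpha\in\mathcal{M}^\var$ with $T\bullet\alpha\neq\di$, and every $v\in\var$,
\[
(T\bullet\alpha)(v)\;=\;\semone{v^{\sigma_T}}_{\mathcal{M},\alpha}.
\]
The proposition then follows by taking $v=y$, $\alpha=\alpha_{m_0,\ldots,m_k}$, and setting $t_T:=(y^{\sigma_T})^{\rho}$, where $\rho$ is the substitution fixing $x_0,\ldots,x_k$ and sending every other variable to $0$. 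Because $\alpha_{m_0,\ldots,m_k}$ already sends non-input variables to $0$, Lemma~\ref{sublemma} applied to $\rho$ turns the outer substitution into a no-op on the initial assignment, while structurally $t_T$ lies in $\ter(\Sigma_m,\{x_0,\ldots,x_k\})$ as required.

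The base cases are immediate: take $\sigma_\st$ to be the identity, since $\st\bullet\alpha=\alpha$; and for $T=\di$ the hypothesis $T\bullet\alpha\neq\di$ never holds, so any $\sigma$ works. For an assignment prefix $T=\ins\circ T'$, with $\ins$ ranging over $a_i.\cp(x_j)$, $a_i.\setz$, $a_i.\seto$, $a_i.\setai$, $a_i.\setmi$, $a_i.\seta(a_j)$, $a_i.\setm(a_j)$, $y.\cp(a_j)$, I associate the obvious single-point substitution $\tau_\ins$ that mirrors its effect on variables---e.g.\ $\tau_{a_i.\setmi}$ sends $a_i$ to $a_i^{-1}$ and is the identity elsewhere, $\tau_{a_i.\seta(a_j)}$ sends $a_i$ to $a_i+a_j$, and analogously for the remaining instructions. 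The definition of $\bullet$ gives $(\ins\circ T')\bullet\alpha=T'\bullet\alpha'$ with $\alpha'(v)=\semone{v^{\tau_\ins}}_{\mathcal{M},\alpha}$, so Lemma~\ref{sublemma} combined with the induction hypothesis yields the required equation if one sets $\sigma_T(v):=(\sigma_{T'}(v))^{\tau_\ins}$.

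The interesting case is the test $T=T_0\trianglelefteq a_i.\test\trianglerighteq T_1$, and this is where the cancellation hypothesis is indispensable. I would define
\[
\sigma_T(v)\;:=\;0_{a_i}\cdot\sigma_{T_0}(v)+1_{a_i}\cdot\sigma_{T_1}(v)
\]
and exploit the fact that $1_{a_i}=a_i\cdot a_i^{-1}$ evaluates to $0$ when $\alpha(a_i)=0$ (directly from the meadow axioms) and to $1$ when $\alpha(a_i)\neq 0$ (by the inverse law $\IL$, valid in cancellation meadows). The pair $0_{a_i},1_{a_i}$ then behaves as a perfect arithmetic if-then-else: only the $T_0$-branch survives when the test succeeds, and only the $T_1$-branch when it fails, matching exactly the clause for $(T_0\trianglelefteq a_i.\test\trianglerighteq T_1)\bullet\alpha$. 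Note that if the \emph{unselected} branch happens to diverge---so the induction hypothesis gives no information about it---the factor $0$ in front absorbs its term value and the equation still holds.

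The main obstacle is precisely this test case, and it also explains the conceptual reason for restricting the statement to cancellation meadows: in a general meadow with proper zero-divisors one has $1_x\neq 1$ for some nonzero $x$ (already in $\mathbb{Z}/6\mathbb{Z}$), and the arithmetic if-then-else above collapses incorrectly. Everything else is routine composition of substitutions, cleanly packaged by Lemma~\ref{sublemma}.
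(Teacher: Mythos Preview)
Your proof is correct and follows essentially the same route as the paper: structural induction on the finite thread, handling assignment prefixes via Lemma~\ref{sublemma} with the obvious single-point substitutions, and handling the zero test by the $0_{a_i}/1_{a_i}$ if-then-else, which is where $\IL$ is invoked. The only difference is bookkeeping---you carry a full substitution $\sigma_T$ tracking every variable, whereas the paper carries just the single term $t_T$ for the output variable $y$ and substitutes into it at each step; since $\sigma_T(y)$ is exactly the paper's $t_T$, the two arguments coincide once specialised to $v=y$.
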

\begin{proof}
We use induction loading and 
employ structural induction on $T$ in order to prove the assertion stating the existence of a term $t_{T}\in 
\ter(\Sigma_m,\var)$ such that for all cancellation meadows $\mathcal{M}$ and all assignments $\alpha \in
\mathcal{M}^\var$, 
\[
(T\bullet \alpha )(y)=\semone{t_{T}}_{\mathcal{M},\alpha}
\]
if $T\bullet \alpha \neq \di$.

If $T=\st$, then 
\[\st\bullet \alpha)(y)=\alpha (y)= \semone{y}_{\mathcal{M},\alpha}.
\]
Hence $t_{\st}\equiv y$. If $T=\di$, we also put $t_{\di}\equiv y$. For the induction step, we have to distinguish 9 cases each of which corresponds to one
the 9 instructions sorts in $Ins(\mathbb{Q})$. The assignment instructions are proven straightforwardly using
the previous substitution lemma. We show 3 cases.

Suppose $T= a_i.\cp (x_j)\circ T'$ and $T\bullet \alpha \neq \di$. Then
\[
\begin{array}{rcll}
(T\bullet \alpha)(y)&=&(T'\bullet \alpha[a_i:=\alpha(x_j)])(y)\\
&=&\semone{t_{T'}}_{\mathcal{M},\alpha[a_i:=\alpha(x_j)]}&\text{by the induction hypothesis}\\
&=&\semone{t^\sigma_{T'}}_{\mathcal{M},\alpha}&\text{by Lemma \ref{sublemma}}
\end{array}
\]
where $\sigma(a_i) = x_j$, and $\sigma(v) = v$ if $v\not\equiv  a_i$. Hence $t_T\equiv t^\sigma_{T'}$ suffices. Likewise, if $T= a_i.\seta (a_j) \circ T'$ and $T\bullet \alpha \neq 
\di$, then
\[
\begin{array}{rcll}
(T\bullet \alpha)(y)&=&(T'\bullet \alpha[a_i:=\alpha (a_i) + \alpha (a_j)])(y)\\
&=&\semone{t_{T'}}_{\mathcal{M},\alpha[a_i:=\alpha (a_i) + \alpha (a_j)]}&\text{by the induction hypothesis}\\
&=&\semone{t_{T'}^\sigma}_{\mathcal{M},\alpha}&\text{by Lemma \ref{sublemma}}
\end{array}
\]
where $\sigma(a_i) = a_i+a_j$, and $\sigma(v)=v$ if $v\not \equiv a_i$. And if $T= y.\cp (a_j) \circ T'$ and $T\bullet \alpha \neq \di$, then
\[
\begin{array}{rcll}
(T\bullet \alpha)(y)&=&(T'\bullet \alpha[y:=\alpha (a_j)])(y)\\
&=&\semone{t_{T'}}_{\mathcal{M},\alpha[y:=\alpha (a_j)]}&\text{by the induction hypothesis}\\
&=&\semone{t_{T'}^\sigma}_{\mathcal{M},\alpha}&\text{by Lemma \ref{sublemma}}
\end{array}
\]
where $\sigma(y)=a_j$, and $\sigma(v)=v$ if $v\not \equiv  y$.  

The case for the zero test
 exploits the fact that in every cancellation meadow $\mathcal{M}$ we have
\[
\semone{0_{a_i}\cdot t + 1_{a_i} \cdot t'}_{\mathcal{M},\alpha}=
\begin{cases}
\semone{t}_{\mathcal{M},\alpha}& \text{ if  } \alpha(a_i) = 0\\
\semone{t'}_{\mathcal{M},\alpha}& \text{ otherwise  } 
\end{cases}
\]
Hence, if $T= T'\trianglelefteq a_i.\test   
\trianglerighteq T''$ we can take 
\[
t_{T}\equiv 0_{a_i}\cdot t_{T'} + 1_{a_i }
\cdot t_{T''}. 
\]

The original assertion now follows from the observation that we can replace all occurrences of auxiliary variables, 
the output variable $y$ and all input variables $x_n$ with $k<n$ in the term $t_T$ by $0$ if $\alpha$ is initial. \end{proof}
\begin{definition}
We shall say that the thread $T$ \emph{computes} $t\in \ter(\Sigma_m,\{x_0, \ldots , x_k\})$, if for all  cancellation meadows 
$\mathcal{M}$ and all
$m_0, \ldots , m_k \in \mathcal{M}$, 
\[
\semone{T}^k_{\mathcal{M}}(m_0, \ldots , m_k)=\semone{t}_{\mathcal{M},\alpha_{m_0, \ldots , m_k}}.
\]
\end{definition}
Thus if $T$ is finite, the free variables of $t_T$ are among $\{x_0, \ldots , x_k\}$ and $\semone{T}^k_{\mathcal{M}}$ is total,
then $T$ computes the term $t_T$. Conversely, every meadow term $t$ with free variables 
in $\var_{in}$ can be computed by a finite thread $T_t$ which is in addition 
\emph{test-free}|that is, postconditional composition occurs as action prefix 
only|and which uses at most 5 auxiliary  variables. To these ends, we shall define the \emph{raise $T^1$ of a thread $T$}  as the thread $T'$ obtained
from $T$ by raising the subscript of every auxiliary variable occurring in $T$ by 1.
\begin{definition}
\begin{enumerate}
\item Let $i\in \ins (\mathbb{Q})$ be an instruction. Then $i^1$ is defined by
\[
\begin{array}{rcll}
a_i.\cp (x_j)^1&=& a_{i+1}.\cp (x_j)\\
a_i.\setz^1 &=&a_{i+1}.\setz  \\
a_i.\seto^1&=&a_{i+1}.\seto \\
a_i.\setai^1&=& a_{i+1}.\setai  \\
a_i.\setmi^1&=& a_{i+1}.\setmi \\ 
a_i.\seta (a_j)^1 &=& a_{i+1}.\seta (a_{j+1}) \\
a_i.\setm (a_j)^1&=&a_{i+1}.\setm (a_{j+1}) \\
a_i.\test ^1&=& a_{i+1}.\test \\
y.\cp (a_j)^1&=& y.\cp (a_{j+1})
\end{array}
\]
\item Let $T$ be a finite thread. Then $T^1$ is defined inductively by
$\st^1=\st$, $\di^1=\di$, and $(T'\trianglelefteq i  \trianglerighteq T'')^1=T'^1\trianglelefteq i^1  \trianglerighteq T''^1$ 
for $i\in \ins (\mathbb{Q})$.
\end{enumerate}
\end{definition} 
A thread and its raise compute the same values.
\begin{lemma}
Let $T$ be a finite thread and $k\in \Nat$.  Then for all  meadows $\cal{M}$ and all $m_0, \ldots , m_k\in \cal{M}$
\[
\semone{T}^k_{\mathcal{M}}(m_0, \ldots , m_k)\downarrow \longrightarrow \semone{T}^k_{\mathcal{M}}(m_0, \ldots , m_k)=
\semone{T^1}^k_{\mathcal{M}}(m_0, \ldots , m_k).
\]
\end{lemma}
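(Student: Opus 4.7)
The plan is to strengthen the statement via induction loading and then argue by structural induction on $T$. The key insight is that the raising operation $T \mapsto T^1$ shifts every auxiliary-variable index uniformly by $1$, so running $T^1$ on an input whose auxiliary slots have been correspondingly shifted (with a fresh zero placed in the $a_0$ slot) should exactly mirror running $T$ on the original input, while leaving the input variables and $y$ untouched. To capture this, I introduce the index-shifting operator $\rho:\mathcal{M}^\var\to\mathcal{M}^\var$ defined by
\[
\rho(\alpha)(x_j)=\alpha(x_j),\quad \rho(\alpha)(y)=\alpha(y),\quad \rho(\alpha)(a_0)=0,\quad \rho(\alpha)(a_{i+1})=\alpha(a_i).
\]
The strengthened claim then reads: for every finite thread $T$ and every $\alpha\in\mathcal{M}^\var$, if $T\bullet\alpha\neq\di$, then $T^1\bullet\rho(\alpha)\neq\di$ and $T^1\bullet\rho(\alpha)=\rho(T\bullet\alpha)$.

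I prove this by structural induction on $T$. The base cases $T=\st$ and $T=\di$ are immediate, the second being vacuous. The inductive step splits into one case per instruction from $\ins(\mathbb{Q})$. Each assignment-instruction case follows a uniform pattern: using the identity $\rho(\alpha)[a_{i+1}:=v]=\rho(\alpha[a_i:=v])$ (and an analogous equality for updates to $y$), applying the leading (raised) instruction of $T^1$ to $\rho(\alpha)$ lands in $\rho$ of the corresponding update of $\alpha$, so the induction hypothesis on the tail thread closes the case. For the zero test $T=T'\trianglelefteq a_i.\test\trianglerighteq T''$, the point is that $\rho(\alpha)(a_{i+1})=\alpha(a_i)$, so the branching of $T^1$ on $\rho(\alpha)$ copies that of $T$ on $\alpha$, and the relevant induction hypothesis applies to the selected branch.

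To deduce the lemma, I observe that the initial assignment $\alpha_{m_0,\dots,m_k}$ is a fixed point of $\rho$: it assigns $0$ to $y$ and to every $a_i$, which matches $\rho(\alpha_{m_0,\dots,m_k})(a_0)=0$ and $\rho(\alpha_{m_0,\dots,m_k})(a_{i+1})=\alpha_{m_0,\dots,m_k}(a_i)=0$. The strengthened statement therefore gives $T^1\bullet\alpha_{m_0,\dots,m_k}=\rho(T\bullet\alpha_{m_0,\dots,m_k})$, and since $\rho$ fixes $y$, evaluating both sides at $y$ yields $\semone{T^1}^k_\mathcal{M}(m_0,\dots,m_k)=\semone{T}^k_\mathcal{M}(m_0,\dots,m_k)$. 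The only real obstacle is the bookkeeping in the induction step; each of the nine instruction cases is routine once one verifies the elementary commutation $\rho\circ(\text{update on }a_i)=(\text{update on }a_{i+1})\circ\rho$, which is a small variable-by-variable check.
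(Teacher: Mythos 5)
Your proof is correct and follows essentially the same route as the paper: strengthen to arbitrary assignments via an index-shifting operation on $\mathcal{M}^\var$, argue by structural induction on $T$ through the nine instruction cases plus the zero test, and conclude because the initial assignment is fixed by the shift. The only (immaterial) difference is that you shift the input of $T^1$ upward and track the full resulting state, $T^1\bullet\rho(\alpha)=\rho(T\bullet\alpha)$, whereas the paper applies the dual down-shift $\alpha^1$ to the input of $T$ and compares only the $y$-components.
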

\begin{proof}
For the sake of the proof, we define for $\alpha \in \mathcal{M}^\var$ the raise
$\alpha^1\in \mathcal{M}^\var$ by 
\[
\alpha^1(v)=
\begin{cases}
\alpha (a_{i+1}) & \text{ if } v \equiv a_i,\\
\alpha (v )& \text{ if } v\not\in \var_{aux}.
\end{cases}
\]
We now show that 
\[
T\bullet \alpha^1\neq \di \longrightarrow (T\bullet \alpha^1)(y)=(T^1\bullet \alpha)(y)
\]
by structural induction on $T$. If $T=\st$, then
\[ (T\bullet \alpha^1)(y)= (\st\bullet \alpha^1)(y)= \alpha^1 (y)= \alpha (y)=( \st\bullet \alpha )(y) = 
(T^1\bullet \alpha )(y).\]
For the induction step, we have to distinguish 9 cases each of which corresponds to one of the 9 instruction sorts.
Each case follows straightforwardly. We show the case that $T=a_i.\cp (x_j) \circ T'$:
\[
\begin{array}{rcl}
(T\bullet \alpha^1)(y)&=&((a_i.\cp (x_j) \circ T')\bullet \alpha^1)(y)\\
&=&(T'\bullet \alpha^1[a_i:=\alpha^1(x_j)])(y)\\
&=&(T'\bullet (\alpha [a_{i+1}:=\alpha (x_j)])^1)(y)\\
&=&(T'^1\bullet \alpha [a_{i+1}:=\alpha (x_j)])(y)\text{ by the induction hypothesis}\\
&=&(a_{i+1}.\cp (x_j)\circ T'^1\bullet \alpha )(y)\\
&=&(T^1\bullet \alpha )(y)\\
\end{array}
\] 
The statement now follows from the the observation that $\alpha^1 = \alpha$ if $\alpha$ is initial.
\end{proof}
In the sequel, we shall say that a thread $T$ (an instruction sequence $I$)
\emph{uses
the auxiliary variable $a_i$}, if the variable $a_i$ occurs in at least one of $T$'s ($I$'s) atomic actions (basic instructions). Moreover, we shall say that
$T$ ($I$) \emph{uses  $n$ auxiliary variables}, if $T$ ($I$) uses precisely the auxiliary variables 
$a_0, \ldots , a_{n-1}$.
\begin{lemma}\label{auxnumber}
\begin{enumerate}
\item 
If $t\in \var_{in}\cup \{0,1\}$, then $t$ can be computed by a finite and test-free thread that uses 1 auxiliary
variable.
\item 
If $t$ can be computed by a finite and test-free thread that uses $n$ auxiliary
variables, then so can $-t$ and $t^{-1}$.
\item Suppose $t,t'\in \ter (\Sigma_m, \var_{in})$
can be computed by finite and test-free threads that use $n$ and $m$ auxiliary variables, 
respectively. If $n=m$, then $t+t'$ and $t\cdot t'$ can be computed by  finite and test-free threads that
use $n+1$ auxiliary variables, and if $n\neq m$, then $t+t'$ and $t\cdot t'$ can be 
computed finite and test-free threads that
use $\max\{n,m\}$ auxiliary variables.
\end{enumerate}
\end{lemma}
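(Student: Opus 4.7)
The plan is to prove each of the three parts by explicit construction, while strengthening the inductive content so that every thread $T_t$ I produce has the canonical shape $(\ast)$: $T_t=T_t^\ast\circ y.\cp(a_0)\circ\st$ with $a_0$ already holding the value of $t$ just before the final copy, and, moreover, every auxiliary variable appearing in $T_t$ is written by a ``pure write'' ($\setz$, $\seto$, or $\cp$) before any instruction reads it (call this \emph{initialization-safe}). Property $(\ast)$ is exactly what will allow me to splice a raised sub-thread inside a larger composition in Part~(3), even though the aux registers have already been modified by an earlier segment.

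For Part~(1), the three instruction sequences
\[
a_0.\cp(x_i);\,y.\cp(a_0);\,!\qquad a_0.\setz;\,y.\cp(a_0);\,!\qquad a_0.\seto;\,y.\cp(a_0);\,!
\]
extract to threads that compute $x_i$, $0$, and $1$ respectively, each using only $a_0$ and each in shape $(\ast)$. For Part~(2), given $T_t=T_t^\ast\circ y.\cp(a_0)\circ\st$, set
\[
T_{-t}:=T_t^\ast\circ a_0.\setai\circ y.\cp(a_0)\circ\st,\qquad T_{t^{-1}}:=T_t^\ast\circ a_0.\setmi\circ y.\cp(a_0)\circ\st.
\]
Since $a_0$ holds the value of $t$ after $T_t^\ast$, these produce $-t$ and $t^{-1}$ in $a_0$, and both $(\ast)$ and the aux-variable count are preserved.

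Part~(3) is the principal construction. By commutativity of $+$ and $\cdot$ in meadows I may assume $n\ge m$ (otherwise swap $t$ and $t'$). Write the two sub-threads as $T_t=T_t^\ast\circ y.\cp(a_0)\circ\st$ and $T_{t'}=T_{t'}^\ast\circ y.\cp(a_0)\circ\st$, and let $T_{t'}^{\ast 1}$ be the raise of $T_{t'}^\ast$, so that it uses $a_1,\ldots,a_m$ and places the value of $t'$ in $a_1$. Property $(\ast)$ survives raising, so $T_{t'}^{\ast 1}$ does this correctly regardless of the contents of $a_1,\ldots,a_m$ at the moment it begins. Define
\[
T_{t+t'}:=T_t^\ast\circ T_{t'}^{\ast 1}\circ a_0.\seta(a_1)\circ y.\cp(a_0)\circ\st,
\]
and analogously $T_{t\cdot t'}$ with $a_0.\setm(a_1)$ in place of $a_0.\seta(a_1)$. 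After $T_t^\ast$ runs, $a_0=t$; then $T_{t'}^{\ast 1}$ leaves $a_0$ untouched and puts $t'$ in $a_1$; the arithmetic step deposits $t+t'$ (resp.\ $t\cdot t'$) in $a_0$; the final copy outputs the result. The aux-variable set used is $\{a_0,\ldots,a_n\}$ when $n=m$ (so $n+1$ variables) and $\{a_0,\ldots,a_{n-1}\}$ when $n>m$ (since $\{a_1,\ldots,a_m\}\subseteq\{a_1,\ldots,a_{n-1}\}$), giving $\max\{n,m\}$ variables in total.

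The main obstacle is not the bookkeeping of aux-variable counts but justifying the tacit upgrade from the literal hypothesis ``$t$ has some computing thread with $n$ aux variables'' to ``$t$ has a computing thread in canonical shape $(\ast)$ with $n$ aux variables''. I handle this by two cheap edits applied to any given thread: first prepend $a_0.\setz;\ldots;a_{n-1}.\setz$, which zeroes the aux registers at the outset and thus renders the whole thread initialization-safe without enlarging its aux-variable set; second, if the final output copy reads some $a_j\ne a_0$, insert $a_0.\setz;\,a_0.\seta(a_j)$ just before it and retarget the copy to $a_0$. A routine induction on the structure of a test-free thread then confirms that initialization-safety entails independence from the starting aux state, which is precisely what makes the $T_{t'}^{\ast 1}$ segment of Part~(3) produce the correct value of $t'$ after $T_t^\ast$ has already dirtied the auxiliary registers.
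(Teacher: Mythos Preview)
Your proof is correct and follows the same construction as the paper's: build threads in the canonical ``value in $a_0$, then $y.\cp(a_0)\circ\st$'' shape, and in Part~(3) run one sub-thread unraised followed by the other raised so that the two results sit in $a_0$ and $a_1$ before the final arithmetic step. The only substantive difference is where the zeroing lives: the paper inserts $a_1.\setz\circ\cdots\circ a_n.\setz$ explicitly between the two segments, whereas you fold it into the sub-thread via your initialization-safety invariant; and your final paragraph, converting an arbitrary computing thread to canonical shape, supplies a justification the paper omits (it simply announces ``we shall construct appropriate threads of the form $i_1\circ\cdots\circ i_k\circ y.\cp(a_0)\circ\st$'' and maintains that form tacitly, which suffices for the lemma's inductive use in Proposition~\ref{5aux}).
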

\begin{proof}
We shall construct appropriate threads of the form
\[
i_1 \circ \cdots \circ i_k \circ y.\cp (a_0) \circ \st
\]
\begin{enumerate}
\item Observe that
$
a_0.\cp(x_i)\circ y.\cp (a_0) \circ \st
$, 
$
a_0.\setz\circ y.\cp (a_0) \circ \st
$, and 
$
a_0.\seto\circ y.\cp (a_0) \circ \st
$
compute $x_i$, $0$ and $1$, respectively.
\item Suppose $T=i_1 \circ \cdots \circ i_k \circ y.\cp (a_0) \circ \st$
computes $t$.
Then
\[
i_1 \circ \cdots \circ i_k \circ a_0.\setai \circ y.\cp (a_0) \circ \st
\]
computes $-t$ and 
\[
i_1 \circ \cdots \circ i_k \circ a_0.\setmi \circ y.\cp (a_0) \circ \st
\]
computes $t^{-1}$. Both threads use as many auxiliary variables as $T$ and are finite and test-free.
\item Suppose $T=i_1 \circ \cdots \circ i_k \circ y.\cp (a_0) \circ \st$ uses $n$ auxialiary 
variables to compute $t$, and $T'=j_1 \circ \cdots \circ j_l \circ y.\cp (a_0) \circ \st$
uses $m$ auxiliary variables to compute $t'$. We first assume that $n\leq m$. Since
by the previous lemma $T^1$ also computes $t$, we have that 
\[
j_1 \circ \cdots \circ j_l \circ a_1.\setz \circ \cdots \circ a_{n}.\setz \circ 
i_1^1 \circ \cdots \circ i_k^1 \circ a_0.\seta(a_1)\circ y.\cp (a_0) \circ \st
\] 
computes $t+t'$. This thread is finite and test-free, and uses $n+1$ auxiliary variables if $n=m$ and otherwise
$m$ variables. If $m<n$ then
\[
i_1 \circ \cdots \circ j_k \circ a_1.\setz \circ \cdots \circ a_{m}.\setz \circ 
j_1^1 \circ \cdots \circ j_l^1 \circ a_0.\seta(a_1)\circ y.\cp (a_0) \circ \st
\] 
uses $n$ auxiliary variables and computes $t'+t$ and hence $t+t'$.

For $t\cdot t'$ we replace the action $a_0.\seta(a_1)$ in the above threads by $a_0.\setm(a_1)$.
\end{enumerate}
\end{proof}

\begin{proposition}\label{5aux}
Let $t\in \ter (\Sigma_m, \var_{in})$ be a meadow term. Then $t$ can be computed by a finite, test-free thread $T$ that uses 
5 auxiliary variables.
\end{proposition}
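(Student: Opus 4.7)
The strategy is to push $t$ into a normal form with bounded Strahler-type structure, and then to apply Lemma \ref{auxnumber} level by level. By Corollary \ref{smf}, we may replace $t$ by a provably (in $\Md$, hence in every cancellation meadow) equal sum of quotients of polynomials, so it suffices to show that every term of the shape $\frac{s_0}{t_0} + \cdots + \frac{s_n}{t_n}$ with each $s_i,t_i$ a polynomial can be computed by a finite, test-free thread using at most $5$ auxiliary variables. Padding at the end with instructions of the form $a_i.\setz$ for any unused indices makes the count exactly $5$ if one insists on that.

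I then bound the auxiliary-variable count level by level using Lemma \ref{auxnumber}. First, the literals $0$, $1$, and $x_j$ each use $1$ variable by part~(1). An integer numeral $c$ is built as a left-associated sum $1 + 1 + \cdots + 1$: the initial combination $1+1$ needs $1+1 = 2$ by the equal-count case of part~(3), and every subsequent addition combines a running value of count $2$ with the summand $1$ of count $1$, so by the unequal case the count stays at $\max(2,1) = 2$; negative coefficients are absorbed by part~(2). A monomial $c \cdot x_{i_1} \cdots x_{i_p}$ is then a left-associated product of the coefficient (count $\le 2$) with literals (count $1$), keeping the count at $\max(2,1) = 2$. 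A polynomial, being a left-associated sum of monomials, jumps to $3$ at the first combination (equal case, $2$ and $2$ give $3$) and stays there afterwards (unequal case, $\max(3,2) = 3$). Each quotient $s_i/t_i = s_i \cdot t_i^{-1}$ then uses at most $4$ variables: inversion preserves the count by part~(2), and by the equal case of part~(3) with $n=m=3$, the product of the two polynomial results uses $4$. Finally, the outer left-associated sum of quotients jumps to $5$ at the first combination of two quotients (equal case, $4$ and $4$ give $5$) and stays there for every further summand (unequal case, $\max(5,4) = 5$). Hence the whole term is computed by a finite, test-free thread using at most $5$ auxiliary variables.

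The hardest part is just the careful bookkeeping, together with the key observation that the SMF representation exposes an expression tree of bounded Strahler number: four nested layers of binary operations (products inside monomials, sums inside polynomials, a single division, and the outer sum of quotients), each contributing at most one extra auxiliary variable over the single variable used for a literal. Without the flattening provided by Corollary \ref{smf}, a naive recursion along the syntactic structure of an arbitrary meadow term would only yield an $O(\log n)$ bound in the size of the term, so the passage through SMF is essential for getting a constant bound.
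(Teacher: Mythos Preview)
Your proof is correct and follows essentially the same approach as the paper: reduce $t$ via Corollary~\ref{smf} to a sum of quotients of polynomials, then climb the four structural levels (numeral/literal $\to$ monomial $\to$ polynomial $\to$ quotient $\to$ sum of quotients) using Lemma~\ref{auxnumber} to bound the auxiliary-variable count by $1,2,3,4,5$ respectively. Your bookkeeping is in fact slightly more explicit than the paper's about why the count does not grow past level $k+1$ once the running left-associated combination has reached it (the new summand always comes from the lower level, so the unequal case of Lemma~\ref{auxnumber}(3) applies), and your padding remark and Strahler-number observation are pleasant additions, but the argument is the same.
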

\begin{proof}
By Corollary~\ref{smf} $t$ can be represented as a sum of quotients of polynomials. Moreover, every polynomial
can be written as a sum of monomials, i.e.\ expressions of the form 
$n\cdot x_{i_1}\cdots x_{i_k}$ or $-n\cdot x_{i_1}\cdots x_{i_k}$. 
Since $n=1+ \cdots +1+0 +0$ it can be computed by a finite and test-free thread that uses 2 auxiliary variables 
by~\ref{auxnumber}.(1) and (3). Thus also $n\cdot x_{i_1}, \ldots , n\cdot x_{i_1}\cdots x_{i_k}$ can all be computed
by finite and test-free threads that use 2 auxiliary variables. And the same holds for 
$-n\cdot x_{i_1} \cdots x_{i_k}$ by \ref{auxnumber}.(2). Thus every monomial can be computed
by a finite and test-free thread that uses  2 auxiliary variables. It follows that every sum of monomials|and hence 
every polynomial|can be computed
by a finite and test-free thread that uses 3 auxiliary variables by \ref{auxnumber}.(3). Whence every quotient of polynomials
can be computed by a finite and test-free thread that uses 4 auxiliary variables by \ref{auxnumber}.(2) and (3). Invoking 
again \ref{auxnumber}.(3) we obtain that
every sum of quotients of polynomials|and therefore $t$|can be computed by a finite and test-free 
thread that uses 
5 auxiliary variables.
\end{proof}
Summarizing we have proven the following completeness result.
\begin{theorem}
Let $I$ be an instruction sequence and $k\in \Nat$ be such that $\semone{I}_\mathcal{M}^k$ is a total mapping on all 
cancellation meadows $\mathcal{M}$. Then there exists a  straight-line instruction sequence $J$
which uses at most 5 auxiliary variables such that
 $\semone{I}_\mathcal{M}^k=\semone{J}_\mathcal{M}^k$ for all cancellation meadows $\mathcal{M}$. 
\end{theorem}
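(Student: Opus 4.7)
The plan is to chain the three main propositions already established in this section, so that the burden of the proof is really just routing the argument through them in the correct order. Let $T=|I|$, so $\semone{T}^k_\mathcal{M}=\semone{I}^k_\mathcal{M}$ is total on every cancellation meadow. First I would apply Proposition~\ref{totalthread} to $T$ to obtain a finite thread $T_1$ such that $\semone{T_1}^k_\mathcal{M}=\semone{T}^k_\mathcal{M}$ on every cancellation meadow $\mathcal{M}$. In particular $T_1$ is total.

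Next, I would apply Proposition~\ref{fromT2t} to $T_1$ to obtain a meadow term $t_{T_1}\in\ter(\Sigma_m,\{x_0,\dots,x_k\})$ such that, on every cancellation meadow and every tuple $(m_0,\dots,m_k)$ for which $\semone{T_1}^k_\mathcal{M}(m_0,\dots,m_k)\downarrow$, the two values coincide. Since $\semone{T_1}^k_\mathcal{M}$ is total, this means that $T_1$ \emph{computes} $t_{T_1}$ in the sense of the paper's definition, and in particular $\semone{t_{T_1}}_{\mathcal{M},\alpha_{m_0,\dots,m_k}}=\semone{I}^k_\mathcal{M}(m_0,\dots,m_k)$ for every cancellation meadow $\mathcal{M}$ and every input.

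Now I apply Proposition~\ref{5aux} to $t_{T_1}$ to obtain a finite, test-free thread $T_2$ that computes $t_{T_1}$ and uses at most $5$ auxiliary variables. Combining the previous equalities we have $\semone{T_2}^k_\mathcal{M}=\semone{I}^k_\mathcal{M}$ on every cancellation meadow.

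The final step, which I expect to be entirely routine, is to translate the finite test-free thread $T_2$ back to a straight-line instruction sequence. By the construction in the proof of Proposition~\ref{5aux}, $T_2$ has the form $i_1\circ i_2\circ\cdots\circ i_n\circ y.\cp(a_0)\circ\st$, so from the equations for thread extraction in Table~\ref{thirteen} the straight-line instruction sequence
\[
J\;\equiv\;i_1;\,i_2;\,\ldots;\,i_n;\,y.\cp(a_0);\,!
\]
satisfies $|J|=T_2$ and therefore $\semone{J}^k_\mathcal{M}=\semone{T_2}^k_\mathcal{M}=\semone{I}^k_\mathcal{M}$ on every cancellation meadow. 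Since $J$ inherits its auxiliary variables from $T_2$, it uses at most $5$ of them, completing the argument. The only conceptually substantive content in this plan lies inside the three cited propositions — the compactness-style finitization, the passage from threads to meadow terms using SMF and the cancellation-meadow identity $0_{a_i}\cdot t+1_{a_i}\cdot t'$ to eliminate tests, and the careful bookkeeping of auxiliary variables via the raise operation — so the present theorem is essentially a packaging result, and I anticipate no further obstacle.
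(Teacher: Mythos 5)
Your proposal is correct and follows essentially the same route as the paper: Proposition~\ref{totalthread} to finitize, Proposition~\ref{fromT2t} (plus totality) to extract a meadow term, Proposition~\ref{5aux} to realize that term by a finite test-free thread with 5 auxiliary variables, and then a straight-line $J$ with $|J|=T'$. Your explicit justification that totality is what makes the finite thread \emph{compute} the term, and your spelling out of the final translation back to an instruction sequence, are just slightly more detailed versions of steps the paper leaves implicit.
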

\begin{proof}
Suppose that $\semone{I}_\mathcal{M}^k$ is total on all cancellation meadows $\mathcal{M}$. By Proposition~\ref{totalthread}, we can pick a finite thread $T$ such that $\semone{I}_\mathcal{M}^k=
\semone{T}_\mathcal{M}^k$ for all cancellation meadows $\mathcal{M}$. By Proposition~\ref{fromT2t} we may assume that 
$T$ computes the term $t\in \ter(\Sigma_m,\{x_0, \ldots , x_k\})$ which in turn is computed by a 
finite and test-free thread $T'$ that uses 5 auxiliary variables by the previous proposition. We can now take a straight-line instruction sequence $J$ with $|J|=T'$.
\end{proof}
\section{Calculation on signed cancellation meadows}
We obtain \emph{signed meadows} by 
extending the signature $\Sigma_m$ of meadows with the unary sign function $\sg(\_)$.
We write $\Sigma_{ms}$ for this extended signature, so 
$\Sigma_{ms} = (0,1,+,\cdot,-,^{-1},\sg)$.
The sign function $\sg$
presupposes an ordering $<$ of its domain and is defined as follows:

\[\sg(x)=\begin{cases}
-1&\text{if }x<0,\\
0&\text{if }x=0,\\
1&\text{if }x>0.
\end{cases}\]

One can define $\sg$ in an equational manner
by the set \SA\ of axioms 
given in Table~\ref{t:sign}. 
\begin{table}[hbtp]
\centering
\rule[-2mm]{8.6cm}{.5pt}
\begin{align}
\label{ax1}
\sg(1_x)&=1_x\\
\label{ax2}
\sg(0_x)&=0_x\\
\label{ax3}
\sg(-1)&=-1\\
\label{ax4}
\sg(x^{-1})&=\sg(x)\\
\label{ax5}
\sg(x\cdot y)&=\sg(x)\cdot \sg(y)\\
\label{ax6}
0_{\sg(x)-\sg(y)}\cdot (\sg(x+y)-\sg(x))&=0
\end{align}
\rule[3mm]{8.6cm}{.5pt}
\vspace{-5mm}
\caption{The set \SA\ of axioms for the sign function}
\label{t:sign}
\end{table}
\noindent
First, notice that by \Md\ and axiom~\eqref{ax1} 
(or axiom~\eqref{ax2}) we find 
\[\sg(0)=0\quad\text{and}\quad\sg(1)=1.\]
Then, observe that in combination with the inverse law \IL, 
axiom \eqref{ax6} is an equational representation of the conditional
equational axiom
\[\sg(x)=\sg(y)~\longrightarrow ~\sg(x+y)=\sg(x).\]
The initial algebra of $\Md\cup\SA$ is $\mathbb{Q}$ expanded with the sign function. A proof follows immediately
from the techniques used in ~\cite{BT07,BR08}.

Some consequences of the $\Md\cup\SA$ are:
\begin{align}
\label{1}
\sg(x^2)&=1_x\text{ because 
$\sg(x^2)=\sg(x)\cdot\sg(x)
=\sg(x)\cdot\sg(x^{-1})=\sg(1_x)=1_x$,}\\
\label{2}
\sg(x^3)&=\sg(x)\text{ because 
$\sg(x^3)=\sg(x)\cdot\sg(x)\cdot\sg(x^{-1})
=\sg(x\cdot (x\cdot x^{-1}))=\sg(x)$,}\\
\label{3}
1_x\cdot\sg(x)&=\sg(x)\text{
because $1_x\cdot\sg(x)=\sg(x^2)\cdot\sg(x)
=\sg(x^3)=\sg(x)$,}\\
\label{5}
\sg(x)^{-1}&=\sg(x)\text{
because $\sg(x)^{-1}\begin{array}[t]{l}=(\sg(x)^2\cdot\sg(x)^{-1})^{-1}=(\sg(x^2)\cdot\sg(x)^{-1})^{-1}
\\=(1_x\cdot\sg(x)^{-1})^{-1}=1_x\cdot \sg(x)=\sg(x).\end{array}$}
\end{align}

So, $0=\sg(x)-\sg(x)=\sg(x)-\sg(x)^3=\sg(x)(1-\sg(x)^2)$ and hence
\begin{equation}
\label{4}
\sg(x)\cdot(1-\sg(x))\cdot (1+\sg(x))=0.
\end{equation}

The \emph{finite basis result}
for the equational theory of cancellation meadows is formulated in a generic way so that
it can be used for any expansion of a meadow that satisfies
the propagation properties defined below.

\begin{definition}
Let $\Sigma$ be an extension of $\Sigma_m=(0,1,+,\cdot,-,^{-1})$, the
signature of meadows. Let $E\supseteq \Md$ (with \Md\ the set
of axioms for meadows given in Table~\ref{Md}).
\begin{enumerate}
\item
$(\Sigma,E)$ has the \textbf{propagation property for pseudo units} if for
each pair of $\Sigma$-terms $t,r$ and context $C[~]$,
\[E\vdash 1_t\cdot C[r]=1_t\cdot C[1_t\cdot r].\]
\item
$(\Sigma,E)$ has the \textbf{propagation property for pseudo zeros} if for
each pair of $\Sigma$-terms $t,r$ and context $C[~]$,
\[E\vdash 0_t\cdot C[r]=0_t\cdot C[0_t\cdot r].\]
\end{enumerate}
\end{definition}
Preservation of these propagation properties admits the following 
nice result:
\begin{theorem}
[Generic Basis Theorem for Cancellation Meadows]
\label{st}
If $\Sigma\supseteq \Sigma_m,~ E\supseteq \Md$ and $(\Sigma,E)$ 
has the
pseudo unit and the pseudo zero propagation property,
then $E$ is a basis (a complete axiomatisation) of 
$\Mod_\Sigma(E\cup\IL)$.
\end{theorem}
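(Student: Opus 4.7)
The plan is to prove the nontrivial (completeness) direction: if every $\mathcal{A} \in \Mod_\Sigma(E \cup \IL)$ satisfies $s=t$, then $E \vdash s = t$. By Birkhoff's completeness theorem this reduces to showing $s=t$ holds in every $\mathcal{A} \in \Mod_\Sigma(E)$, which I will establish by exhibiting each such $\mathcal{A}$ as a subdirect product of cancellation $E$-algebras.

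Given $\mathcal{A} \in \Mod_\Sigma(E)$, I work with the Boolean algebra $B(\mathcal{A})$ of elements of the form $1_t$, which are idempotent by \Ril\ and closed under Boolean operations $e \wedge f = e \cdot f$ and $e \vee f = e + f - ef$. For each prime filter $F$ on $B(\mathcal{A})$, define $x \sim_F y$ iff $e \cdot x = e \cdot y$ for some $e \in F$, and put $\mathcal{A}_F = \mathcal{A}/{\sim_F}$. For any distinct $a,b \in \mathcal{A}$ the idempotent $1_{a-b}$ is nonzero, so Zorn yields a prime filter $F$ containing it, giving $a \not\sim_F b$. The induced map $\mathcal{A} \to \prod_F \mathcal{A}_F$ is therefore injective, a subdirect embedding; if each $\mathcal{A}_F$ lies in $\Mod_\Sigma(E \cup \IL)$, any equation valid in cancellation models holds in each factor, in the product, and hence in $\mathcal{A}$.

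The critical step, where the propagation properties enter, is to verify that $\sim_F$ is a $\Sigma$-congruence. For the meadow operations this is standard localization. For an extra operation $f \in \Sigma \setminus \Sigma_m$, say unary, and $x \sim_F y$ witnessed by $e \in F$, the pseudo-unit propagation applied to the context $C[\_] = f(\_)$ gives
\[
e \cdot f(x) = e \cdot f(e \cdot x) = e \cdot f(e \cdot y) = e \cdot f(y),
\]
so $f(x) \sim_F f(y)$; multi-ary operations and nested contexts extend by induction on term structure. The pseudo-zero propagation supplies the dual identity needed to handle uniformly those subterms that become $0$ under $\sim_F$, so that the equations of $E$ transfer consistently to the quotient. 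Once $\sim_F$ is a $\Sigma$-congruence, $\mathcal{A}_F$ automatically satisfies $E$, and \IL\ holds in $\mathcal{A}_F$ by primality of $F$: any nonzero $c$ in the quotient has $1_c \in F$, forcing $c \cdot c^{-1} = 1$.

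The main obstacle will be the congruence verification for extra operations under arbitrary nesting, together with the bookkeeping of the subdirect decomposition; the two propagation properties are designed precisely so that these verifications go through uniformly in $\Sigma$, reducing the problem to the classical decomposition of a commutative ring through prime filters of its idempotents. Once this is in place the conclusion is immediate: $s = t$ valid on $\Mod_\Sigma(E \cup \IL)$ implies validity in every $\mathcal{A}_F$, hence in $\mathcal{A}$, hence derivability from $E$ by Birkhoff.
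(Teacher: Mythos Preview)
The present paper does not prove this theorem; it is quoted from \cite{BP08}, and the surrounding text only records that $\Md$ and $\Md\cup\SA$ enjoy the two propagation properties. The argument there is syntactic: an arbitrary $\Sigma$-term is brought into Standard Meadow Form (cf.\ Theorem~\ref{SMF}), and the propagation identities are used to push the factors $1_{t}$ and $0_{t}$ through arbitrary contexts so that two SMFs agreeing on all cancellation $E$-models become provably equal from $E$.

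Your route is genuinely different and essentially correct: you argue semantically that every $E$-model is a subdirect product of $E$-models satisfying \IL, whence validity over $\Mod_\Sigma(E\cup\IL)$ implies validity over $\Mod_\Sigma(E)$ and hence $E$-derivability by Birkhoff. The crucial step---that the localisation $\sim_F$ at a prime filter of idempotents is a $\Sigma$-congruence---is exactly where pseudo-unit propagation enters, and your computation $e\cdot f(x)=e\cdot f(e\cdot x)=e\cdot f(e\cdot y)=e\cdot f(y)$ is the right one. Two points deserve to be made explicit. First, you should observe that in any meadow every idempotent $e$ satisfies $e=1_e$ (from $e=e\cdot 1_e=e^{2}\cdot e^{-1}=e\cdot e^{-1}$), so the Boolean algebra of idempotents is exactly $\{1_t:t\in\mathcal A\}$ and the propagation hypothesis applies to every witness $e\in F$; without this your appeal to propagation for an arbitrary $e\in F$ is not justified. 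Second, your invocation of pseudo-zero propagation (``supplies the dual identity needed \ldots so that the equations of $E$ transfer'') is both vague and unnecessary: once $\sim_F$ is a $\Sigma$-congruence the equations of $E$ pass to the quotient automatically, and in fact $\Md\vdash 0_t=1_{0_t}$, so the pseudo-zero property is already an instance of the pseudo-unit one. The syntactic proof of \cite{BP08} yields explicit normal forms that the paper exploits elsewhere (e.g.\ Corollary~\ref{smf}); your semantic proof is shorter and explains structurally why \IL\ can be eliminated.
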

Bergstra and Ponse~\cite{BP08} proved that $\Md$ and $\Md\ {\cup} \SA$ satisfy both propagation properties and
are therefore complete axiomatizations of 
$\Mod_\Sigma(\Md\cup\IL)$ and $\Mod_\Sigma(\Md \cup \SA\cup\IL)$, respectively. Since
\[
\Md \cup \SA\cup\IL\vdash t=0_{t'}\cdot s + 1_{t'} \cdot s' \longrightarrow \sg(t)=0_{t'}\cdot \sg(s) + 1_{t'} \cdot \sg(s')
\]
using $\IL$ and the axioms (1), (2) and (5) of $\SA$, it then follows that
\[
\Md \cup \SA\vdash t=0_{t'}\cdot s + 1_{t'} \cdot s' \Longrightarrow  \Md \cup \SA\vdash\sg(t)=0_{t'}\cdot \sg(s) + 1_{t'} \cdot \sg(s').\  \  \  (\dag)
\]
We can hence adapt the Standard Meadow Form to signed meadow terms as follows.

We write $\Sigma_{rs}=(0,1,+,\cdot,-,\sg)$ for the
signature of signed rings.
A \emph{signed polynomial} is then an expression over $\Sigma_{rs}$, thus  without 
inverse operator. 

\begin{definition}
A term $t\in \ter(\Sigma_{ms},X)$ is a \emph{Standard Signed Meadow Form
(SSMF)} if, for some $n\in\Nat$, $t$ is an \emph{SSMF of level $n$}.
SSMFs of level $n$ are defined as follows:
\begin{enumerate}
\item \textit{SSMF of level $0:$} each expression of the form $s/t$ 
with $s$ and $t$ ranging over signed polynomials,
\item \textit{SSMF of level $n+1:$} each expression of the form
\[0_{t'}\cdot s+1_{t'}\cdot t
\]
with $t'$ ranging over {signed polynomials} and $s$ and $t$ over SSMFs of 
level $n$.
\end{enumerate}
\end{definition}
\begin{theorem}
\label{SSMF}
For each  $t\in \ter(\Sigma_{ms},X)$ there exist  an SSMF $t_{SSMF}$
with the same variables such that $\Md \cup \SA \vdash t=t_{SSMF}$.
\end{theorem}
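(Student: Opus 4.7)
The plan is to proceed by structural induction on $t\in\ter(\Sigma_{ms},X)$, following the proof of Theorem~\ref{SMF} in~\cite{BP08} and adding a single new inductive case for the sign function. For $t\in\{0,1\}\cup X$ the term $t$ is itself a signed polynomial, so $t = t/1$ is already an SSMF of level~$0$. For the constructors $-$, $^{-1}$, $+$, $\cdot$, the constructions of~\cite{BP08} transfer essentially verbatim once "polynomial" is replaced by "signed polynomial": the only closure properties used at level~$0$ (in manipulating numerators and denominators) and at higher levels (in forming the guards $t'$) are closure of signed polynomials under $+$, $\cdot$ and $-$, which holds by definition of $\Sigma_{rs}$. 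Combining two SSMFs of potentially different levels uses the usual equalizing and glueing steps, together with the observation that any level-$k$ SSMF $r$ may be lifted to level $k+1$ via $r = 0_1\cdot r + 1_1\cdot r$.

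The genuinely new case is $t = \sg(t_1)$. By the outer induction hypothesis I have $\Md\cup\SA \vdash t_1 = t_{1,SSMF}$ for some SSMF $t_{1,SSMF}$, so $\Md\cup\SA \vdash \sg(t_1) = \sg(t_{1,SSMF})$ and it is enough to build an SSMF for $\sg(t_{1,SSMF})$. I do this by an inner induction on the level of $t_{1,SSMF}$. At level~$0$, so $t_{1,SSMF} = s/t'$ with $s, t'$ signed polynomials, axioms~\eqref{ax4} and~\eqref{ax5} of $\SA$ give
\[
\sg(s/t') \;=\; \sg(s)\cdot \sg((t')^{-1}) \;=\; \sg(s)\cdot \sg(t'),
\]
which is again a signed polynomial, hence $(\sg(s)\cdot \sg(t'))/1$ is a level-$0$ SSMF. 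At level $n+1$, writing $t_{1,SSMF} = 0_{t''}\cdot s + 1_{t''}\cdot s'$ with $s, s'$ SSMFs of level~$n$, the derived consequence $(\dag)$ of $\Md\cup\SA$ yields
\[
\sg(t_{1,SSMF}) \;=\; 0_{t''}\cdot \sg(s) + 1_{t''}\cdot \sg(s'),
\]
and by inner induction $\sg(s)$ and $\sg(s')$ have SSMFs, which may be recombined (after aligning their levels by the lifting trick above) into an SSMF of level at most $n+1$ for $\sg(t_{1,SSMF})$.

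The main obstacle I anticipate is not mathematical but bookkeeping, namely keeping the level indices aligned when combining SSMFs in both the outer structural induction and the inner induction on level; this is already dealt with in~\cite{BP08} and requires no new ingredient. The only genuinely new content of Theorem~\ref{SSMF} is the $\sg$ case, whose treatment reduces cleanly to $(\dag)$ at higher levels and to axioms~\eqref{ax4} and~\eqref{ax5} at level~$0$, both of which collapse $\sg$ onto signed polynomials without introducing any inverses.
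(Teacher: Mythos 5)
Your proposal is correct and follows essentially the same route as the paper, whose proof of Theorem~\ref{SSMF} is precisely ``as in \cite{BP08} using $(\dag)$'': you adapt the Standard Meadow Form construction of \cite{BP08} with signed polynomials in place of polynomials and handle the new $\sg$-case by $(\dag)$ at positive levels and by axioms~\eqref{ax4} and~\eqref{ax5} at level~$0$. Your explicit level-$0$ computation $\sg(s/t')=\sg(s)\cdot\sg(t')$ and the level-lifting trick are exactly the bookkeeping the paper leaves implicit.
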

\begin{proof}
As in \cite{BP08} using $(\dag)$.
\end{proof}

As in Corollary~\ref{smf} it follows that every signed meadow term is provably equal to a sum of quotients of signed polynomials.
\begin{corollary}\label{ssmf}
For every  $t\in \ter(\Sigma_{ms},X)$ there exist signed polynomials $s_0, t_0, \ldots , s_n, t_n$ such that
\[
\Md \cup \SA \vdash t=\frac{s_0}{t_0} + \ldots + \frac{s_n}{t_n}. 
\]
\end{corollary}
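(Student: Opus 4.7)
The plan is to mimic the proof of Corollary~\ref{smf} essentially verbatim, but with signed polynomials replacing polynomials and Theorem~\ref{SSMF} replacing Theorem~\ref{SMF}. By Theorem~\ref{SSMF}, given $t \in \ter(\Sigma_{ms}, X)$, I may pick an SSMF $t_{SSMF}$ such that $\Md \cup \SA \vdash t = t_{SSMF}$, and then I induct on the level $n$ of $t_{SSMF}$.

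For the base case $n = 0$, the SSMF has the form $s_0/t_0$ with $s_0, t_0$ signed polynomials, so the claim is immediate with a single summand. For the inductive step, suppose $n = m+1$, so $t_{SSMF} = 0_{t''} \cdot s + 1_{t''} \cdot t'$ with $t''$ a signed polynomial and $s, t'$ SSMFs of level $m$. By the induction hypothesis there are signed polynomials yielding provable equalities
\[
\Md \cup \SA \vdash s = \frac{s_0}{t_0} + \cdots + \frac{s_k}{t_k}, \qquad \Md \cup \SA \vdash t' = \frac{u_0}{v_0} + \cdots + \frac{u_l}{v_l}.
\]
I then rewrite $0_{t''} = 1 - t''/t''$ and $1_{t''} = t''/t''$, distribute over the sums, and reorganize, exactly as in the proof of Corollary~\ref{smf}:
\[
t_{SSMF} = s + \frac{-t'' s_0}{t'' t_0} + \cdots + \frac{-t'' s_k}{t'' t_k} + \frac{t'' u_0}{t'' v_0} + \cdots + \frac{t'' u_l}{t'' v_l},
\]
and substituting the IH expression for $s$ gives the desired sum of quotients of signed polynomials.

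No new difficulty arises compared with Corollary~\ref{smf}: the calculation uses only ring axioms together with the identity $t''/t'' + (1 - t''/t'') = 1$, which is derivable already in $\Md$. The sign function $\sg$ enters only passively, inside the signed polynomials $s_i, t_i, u_j, v_j, t''$, and is never manipulated during the rearrangement. So the proof is a one-line appeal: repeat the induction of Corollary~\ref{smf} using Theorem~\ref{SSMF} in place of Theorem~\ref{SMF} and observing that the closure of signed polynomials under $+$, $\cdot$, $-$ is all that is required. The only thing worth flagging is that the products $t'' s_i$ and $t'' v_j$ remain signed polynomials because $\Sigma_{rs}$ is closed under multiplication, so the result produced truly has the advertised shape.
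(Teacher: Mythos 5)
Your proposal is correct and is exactly the paper's intended argument: the paper gives no separate proof of Corollary~\ref{ssmf}, merely noting it follows ``as in Corollary~\ref{smf}'', i.e.\ by repeating that induction on the SSMF level supplied by Theorem~\ref{SSMF}, which is what you do. Your added observation that signed polynomials are closed under the ring operations, so the products $t''s_i$, $t''t_i$ remain signed polynomials, is the only point needing comment and you handle it correctly.
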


Signed polynomials also enjoy a standard form.
\begin{lemma}\label{signednf}
Let $t$ be a signed polynomial and $n\in \Nat$ be the number of its subterms of the form $\sg(t')$. Then there are polynomials 
$t_1, t_{1_1}, \ldots , t_{1_n}, \ldots , t_i, t_{i_1}, \ldots , t_{i_n}, \ldots , t_{3n}, t_{3n_1}, \ldots , t_{3n_n}$ such that
 \[
\Md \cup \SA \vdash t=\Sigma_{i=1}^{3^n} \Pi_{j=1}^n 0_{\phi(\sg(t_{i_j}))}\cdot t_i 
\]
where $\phi(\sg(t_{i_j}))\in\{\sg(t_{i_j}), 1+ \sg(t_{i_j}), 1- \sg(t_{i_j})\}$.
\end{lemma}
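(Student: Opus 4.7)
The plan is to induct on $n$, the number of occurrences of $\sg$ in $t$. The base case $n=0$ is immediate: $t$ is already a polynomial, so the sum collapses to the single summand $t_1 = t$ with the empty product interpreted as $1$.

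For the inductive step, I would fix an innermost occurrence $\sg(s)$, so that $s$ itself is a polynomial. The argument then turns on the \emph{case-split identity}
\[
t \;=\; 0_{\sg(s)} \cdot t[\sg(s) := 0] \;+\; 0_{1-\sg(s)} \cdot t[\sg(s) := 1] \;+\; 0_{1+\sg(s)} \cdot t[\sg(s) := -1].
\]
Granting this, each $t[\sg(s) := c]$ is a signed polynomial with strictly fewer occurrences of $\sg$ than $t$, so the induction hypothesis applies to it. If the substitution eliminated $k>1$ occurrences of $\sg(s)$ rather than just one, I pad the result by multiplying with $(k-1)$ copies of the dummy identity $1 = 0_{\sg(0)} + 0_{1-\sg(0)} + 0_{1+\sg(0)}$, whose summands already have the required shape (with the polynomial $0$ inside $\sg$). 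Assembling the three cases yields $3 \cdot 3^{n-1} = 3^n$ summands, each carrying exactly $n$ factors $0_{\phi(\sg(t_{i_j}))}$ of the prescribed form.

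The real content is the case-split identity, which I would prove by combining two ingredients. The first is the \emph{partition-of-unity} identity $0_{\sg(s)} + 0_{1-\sg(s)} + 0_{1+\sg(s)} = 1$: in any cancellation signed meadow $\sg(s) \in \{-1,0,1\}$, so a trivial three-way case analysis shows exactly one indicator equals $1$ and the other two equal $0$. Since Bergstra and Ponse have shown that $(\Sigma_{ms}, \Md\cup\SA)$ satisfies both propagation properties, the Generic Basis Theorem (Theorem~\ref{st}) lifts this semantic fact to a derivation in $\Md\cup\SA$. The second ingredient is the triple of \emph{substitution identities} (one for each $\phi\in\{\sigma,\ 1{+}\sigma,\ 1{-}\sigma\}$), for instance $0_{\sg(s)} \cdot t = 0_{\sg(s)} \cdot t[\sg(s):=0]$; each of these follows from iterated pseudo-zero propagation applied at every occurrence of $\sg(s)$, using the easy $(\Ril)$-consequences $0_{\sg(s)}\cdot \sg(s)=0$, $0_{1-\sg(s)}\cdot\sg(s)=0_{1-\sg(s)}$, and $0_{1+\sg(s)}\cdot\sg(s)=-0_{1+\sg(s)}$. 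Multiplying $t$ by the partition of unity and applying the substitution identities termwise then produces the case-split identity. The main obstacle is the partition-of-unity identity itself: a direct equational derivation from $\SA$ looks intricate, since it is really cancellation that forces the three indicators to be jointly exhaustive, and routing through the Generic Basis Theorem sidesteps this cleanly by delegating the work to a transparent semantic check.
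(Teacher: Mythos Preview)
Your proposal is correct and follows essentially the same approach as the paper: induct on $n$, isolate an innermost $\sg$-subterm (so its argument is a polynomial), split into three cases via the identity $t = 0_{\sg(s)}\cdot t[\sg(s){:=}0] + 0_{1-\sg(s)}\cdot t[\sg(s){:=}1] + 0_{1+\sg(s)}\cdot t[\sg(s){:=}{-}1]$, and recurse. The only cosmetic differences are that the paper uses a single-hole context $C[\,]$ (so it never has to pad for multiple simultaneous eliminations) and justifies the case-split more tersely by citing $\IL$ together with the derived identity $\sg(x)(1-\sg(x))(1+\sg(x))=0$, implicitly invoking the Generic Basis Theorem exactly as you do explicitly.
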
 
\begin{proof}
We employ induction on the number $n$ of subterms of the form $\sg(t')$. If $n=0$ then $t$ itself is a polynomial and hence $t_1\equiv t$ suffices. 

Suppose $n=l+1$ and pick an innermost subterm $\sg (t')$ of $t$. Then $t\equiv C[\sg (t')]$ for some context $C$ and polynomial $t'$. From $\IL$ together with (11) it follows that $\sg(t')=0$ or $\sg(t')=1$ or $\sg(t')=-1$. Thus
\[
Md \cup \SA \vdash t=0_{\sg(t')}\cdot C[0] + 0_{1-\sg(t')}\cdot C[1] + 0_{1 +\sg(t')}\cdot C[-1]
\]
with $C[0], C[1],$ and $C[-1]$ having $l$ signed subterms. We can now apply the induction hypothesis.
\end{proof}

A suitable instruction for computations on signed meadows is $a.\sets$ with Boolean reply $\tr$ and
the obvious semantics
$a \Leftarrow \sg(a)$. We add this instruction to $Ins(\mathbb{Q})$, and consider instruction sequences and corresponding threads over the enriched instruction set in the sequel.
\begin{example}
Notice that, with the sign function available, the function $\max(x_0,x_1)$ has the following simple definition
\[
\max(x_0,x_1)=\begin{cases}
(\sg(x_0) + 1)\cdot x_0/2 & \text{ if $x_1=0$}\\
\max(x_0-x_1,0) + x_1 & \text{ otherwise}.
\end{cases}
\]
$\max(x,y)$ can be computed by the periodic  instruction sequence
\[
\begin{array}{l}
a_0.\cp(x_0); a_1.\cp(x_1); a_2.\setz; a_4.\cp(x_0); a_1.\test; \#2; \#11;\\[1mm]
(a_3.\seto; a_4.\sets; a_4.\seta(a_3); a_0.\setm(a_4); \\[1mm]
\ a_3.\seta(a_3);  a_3.\setmi; a_0.\setm(a_3); a_0.\seta(a_2); y_0.\cp(a_0); !;\\[1mm]
\ a_1.\setai; a_0.\seta(a_1); a_4.\seta(a_1); a_2.\cp(x_1))^\omega
\end{array}
\]
which also has a finite representation.
\end{example}
The termination predicate and the apply operator can both be extended to regular threads using
the sign instruction in the obvious way by 
\[
\begin{array}{rcl}
R_{a_i.\sets \circ T, n+1}^\mathcal{M} &=  &\{\alpha \in \mathcal{M}^{\var}\mid\alpha 
[a_i:=\sg (\alpha(a_i))]\in R_{T,n}^\mathcal{M}\}\text{, and}\\[1mm]
(a_i.\sets \circ T) \bullet \alpha &=& T \bullet \alpha [a_i:=\sg(\alpha(a_i))].
\end{array}
\]
We then have the following completeness result.
\begin{theorem}
Let $I$ be an instruction sequence and $k\in \Nat$ be such that $\semone{I}_\mathcal{M}^k$ is a total mapping on all 
signed cancellation meadows $\mathcal{M}$. Then there exists a  straight-line instruction sequence $J$
which uses at most 8 auxiliary variables such that
 $\semone{I}_\mathcal{M}^k=\semone{J}_\mathcal{M}^k$ for all cancellation meadows $\mathcal{M}$. 
\end{theorem}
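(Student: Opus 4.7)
The plan is to parallel the unsigned theorem in three stages, adapting each ingredient to the signed setting and tracking auxiliary variable counts carefully.

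First, I would extend the finite-representation results to signed cancellation meadows. The compactness argument of Proposition~\ref{totalthread} goes through verbatim once $\SA$ is added to the set $\Gamma$ together with the cancellation axiom, yielding for any $I$ with $\semone{I}^k_\mathcal{M}$ total on all signed cancellation meadows a finite thread $T$ with $\semone{I}^k_\mathcal{M} = \semone{T}^k_\mathcal{M}$ on all signed cancellation meadows. Next, I would extend Proposition~\ref{fromT2t} to produce a term $t_T \in \ter(\Sigma_{ms}, \{x_0, \ldots, x_k\})$ computed by $T$. Only one case is new: for $T = a_i.\sets \circ T'$, the apply operator yields $(T\bullet\alpha)(y) = (T'\bullet \alpha[a_i := \sg(\alpha(a_i))])(y)$, so by Lemma~\ref{sublemma} we may take $t_T \equiv t_{T'}^{\sigma}$ where $\sigma(a_i) = \sg(a_i)$ and $\sigma$ fixes all other variables. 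The zero-test case is unchanged since signed cancellation meadows are cancellation meadows.

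The bulk of the work is to strengthen Proposition~\ref{5aux}: every signed meadow term $t \in \ter(\Sigma_{ms}, \var_{in})$ can be computed by a finite test-free thread using at most $8$ auxiliary variables. By Corollary~\ref{ssmf}, $t$ is a sum of quotients of signed polynomials, and by Lemma~\ref{signednf} every signed polynomial is a sum of products of the form $\prod_{j=1}^n 0_{\phi(\sg(t_{i_j}))} \cdot t_i$ with each $t_i$ and each $t_{i_j}$ a polynomial. Repeatedly invoking Lemma~\ref{auxnumber}, the aux budget grows as follows: a polynomial uses $3$ (as in the unsigned proof); applying $\sets$ to its result preserves $3$; forming $\phi(\sg(\cdot)) \in \{\sg(\cdot), 1 + \sg(\cdot), 1 - \sg(\cdot)\}$ still uses $3$, since the constant $1$ needs only one aux; expressing $0_{\phi}$ as $1 - \phi \cdot \phi^{-1}$ raises the bound to $4$, because two $3$-aux threads are multiplied; a product $\prod_j 0_{\phi} \cdot t_i$ grows to $5$ at the first multiplication of two $4$-aux factors and stabilises there; summing the $3^n$ such products (one signed polynomial) reaches $6$; a quotient of signed polynomials uses $7$; and a sum of quotients of signed polynomials uses $8$.

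The principal obstacle is to verify that the bound $8$ does not inflate with either the number $n$ of sign subterms occurring in the signed polynomial decomposition or the number of summands. The saving grace is Lemma~\ref{auxnumber}(3): combining threads with unequal aux counts returns the maximum, so once each $0_{\phi}$ has reached $4$ aux, further products inside a single monomial-like factor stay at $5$, and likewise iterated sums of equal-sized terms inflate by only one. Chaining the three adapted propositions and then choosing a straight-line instruction sequence $J$ with $|J| = T'$, where $T'$ is the resulting finite test-free thread, completes the proof.
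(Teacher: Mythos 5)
Your proposal is correct and follows essentially the same route as the paper: extend Propositions~\ref{totalthread} and~\ref{fromT2t} to the signed setting, reduce via Corollary~\ref{ssmf} and Lemma~\ref{signednf} to sums of quotients of signed polynomials, and count auxiliary variables with Lemma~\ref{auxnumber}, arriving at the same tally $3\to4\to5\to6$ for a signed polynomial and then $7$, $8$ for quotients and their sums. Your accounting (including computing $0_{\phi}$ as $1-\phi\cdot\phi^{-1}$ and adding a $\sets$ step for $\sg$) matches the paper's proof; the only difference is that you spell out details the paper leaves as ``straightforward.''
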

\begin{proof}
The propositions~\ref{totalthread} and~\ref{fromT2t} extend straightforwardly to signed cancellation
meadows. Thus $|I|$ computes a term 
$t\in \ter(\Sigma_{ms},\{x_0, \ldots x_k\})$. 
It remains to show that $t$ can be computed by a finite and test-free thread that uses at most 8 auxiliary variables.
From Corollary~\ref{ssmf} it follows that $t$ is provably equal to a sum of quotients of signed polynomials. Then,
following the proof of Proposition~\ref{5aux}, it suffices to prove that a signed polynomial can be computed by a finite and test-free
thread using at most 6 auxiliary variables. To these ends, we invoke Lemma~\ref{signednf}. Thus we may assume that there exist polynomials
$t_1, t_{1_1}, \ldots , t_{1_n}, \ldots , t_i, t_{i_1}, \ldots , t_{i_n}, \ldots , t_{3n}, t_{3n_1}, \ldots , t_{3n_n}$ such that 
\[
t=\Sigma_{i=1}^{3^n} \Pi_{j=1}^n  0_{\phi(\sg(t_{i_j}))}\cdot t_i
\]
where $\phi(\sg(t_{i_j}))\in\{\sg(t_{i_j}), 1+ \sg(t_{i_j}), 1- \sg(t_{i_j})\}$. 

From Lemma~\ref{auxnumber} it follows that a polynomial $t'$ can be computed by a finite and test-free 
thread using 3 auxiliary variables. Say 
\[
i_1 \circ \cdots \circ i_k \circ y.\cp (a_0) \circ \st
\]
computes $t'$. Then 
\[
i_1 \circ \cdots \circ i_k \circ a_0.\sets(a_0)\circ y.\cp (a_0) \circ \st
\]
computes $\sg (t')$ using the same variables.
Thus also
$\phi(\sg(t'))$ can be computed by a finite and test-free thread 
using 3 auxiliary variables. Hence $0_{\phi(\sg(t_{i_j}))}\cdot t_i$ can be computed with 
4 auxiliary variables by a finite and test-free thread. Therefore it takes at most 5 auxiliary variables to compute 
$\Pi_{j=1}^n  0_{\phi(\sg(t_{i_j}))}\cdot t_i$ and  6 to compute $t$ by a finite thread without any tests.
\end{proof}

\section{Conclusions and future work}
We have described  an algebraic execution system that can be used to analyze properties of instruction sequences. It is especially
designed to perform calculation on the signed rational numbers. We have proven that total instruction sequences can be computed by straight-line
programs with a bound supply of auxiliary variables.

Important computer algorithms based on discrete Fourier transformations can be expressed within the signed rational numbers extended with
$\sin$ and $\pi$. For future work, we aim at examining equivalence and simplification problems for this kind of straight-line 
instruction sequences. However, it 
is yet unclear to us where straightening starts to fail.

\end{document}